\algnewcommand{\LineComment}[1]{\State \(\triangleright\) #1}
\def\dd{\mathinner{.\,.}}
\newcommand{\cO}{\mathcal{O}}
\newcommand{\cOtilde}{\tilde{\cO}}
\newcommand{\LCS}{\texttt{LCS}}
\newcommand{\anc}{\textsf{anc}}
\newcommand{\lev}{\textsf{lev}}
\DeclareMathOperator*{\argmax}{argmax}
\DeclareMathOperator*{\argmin}{argmin}
\newcommand{\TG}{\mathcal{T}}
\newcommand{\AG}{\mathcal{A}}
\newcommand{\Vor}{\textsf{Vor}}
\newcommand{\dist}{\textsf{dist}}
\newcommand{\VD}{\textsf{VD}}
\newcommand{\out}[1]{#1^{out}}
\newcommand{\last}[1]{#1_\searrow}
\newcommand{\med}[1]{\textsf{last}(#1)}
\newcommand{\weight}{{\rm \omega}}
\renewcommand{\top}[1]{\lefthalfcap(#1)}
\renewcommand{\bot}[1]{\righthalfcup(#1)}
\newcommand{\tp}{\lefthalfcap}
\newcommand{\bt}{\righthalfcup}
\newcommand{\DDG}{\textsf{DDG}}
\newcommand{\sep}{\textsf{sep}}
\title{An Almost Optimal Edit Distance Oracle}
\author{Panagiotis Charalampopoulos}{The Interdisciplinary Center Herzliya, Israel}{panagiotis.charalampopoulos@post.idc.ac.il
}{https://orcid.org/0000-0002-6024-1557}{}
\author{Pawe\l{} Gawrychowski}{University of Wroc\l{}aw, Poland}{gawry@cs.uni.wroc.pl}{https://orcid.org/0000-0002-6993-5440}{}
\author{Shay Mozes}{The Interdisciplinary Center Herzliya, Israel}{smozes@idc.ac.il}{https://orcid.org/0000-0001-9262-1821}{}
\author{Oren Weimann}{University of Haifa, Israel}{oren@cs.haifa.ac.il}{https://orcid.org/0000-0002-4510-7552}{}
\authorrunning{P.~Charalampopoulos, P.~Gawrychowski, S.~Mozes, and O.~Weimann}
\keywords{longest common subsequence, edit distance, planar graphs, Voronoi diagrams}
\begin{document}

\maketitle

\begin{abstract}
We consider the problem of preprocessing two strings $S$ and $T$, of lengths $m$ and $n$, respectively, in order to be able to efficiently answer the following queries:
Given positions $i,j$ in $S$ and positions $a,b$ in $T$, return the optimal alignment of $S[i \dd j]$ and $T[a \dd b]$. Let $N=mn$. We present an oracle with preprocessing time $N^{1+o(1)}$ and space $N^{1+o(1)}$ that answers queries in $\log^{2+o(1)}N$ time. In other words, we show that we can query the alignment of every two substrings in almost the same time it takes to compute just the alignment of $S$ and $T$. 
Our oracle uses ideas from our distance oracle for planar graphs [STOC 2019] and exploits the special structure of the alignment graph. Conditioned on popular hardness conjectures, this result is optimal up to subpolynomial factors. 
Our results apply to both edit distance and longest common subsequence (LCS).

The best previously known oracle with construction time and size $\cO(N)$ has slow $\Omega(\sqrt{N})$ query time [Sakai, TCS 2019], and the one with size $N^{1+o(1)}$ and query time  $\log^{2+o(1)}N$ (using a planar graph distance oracle) has slow $\Omega(N^{3/2})$ construction time [Long \& Pettie, SODA 2021]. We improve both approaches by roughly a $\sqrt N$ factor.

\end{abstract}

\section{Introduction}

String alignment is arguably the most popular problem in combinatorial pattern matching.
Given two strings $S$ and $T$ of length $m$ and $n$, the problem asks to compute the similarity between the strings according to some similarity measure. The two most popular similarity measures are edit distance and longest common subsequence (LCS). In both cases, the classical solution is essentially the same: Compute the shortest path from vertex $(0,0)$ to vertex $(m,n)$ in the so called {\em alignment graph} of the two strings. 
As taught in almost every elementary course on algorithms, computing this shortest path (and hence the optimal alignment of the two strings) can easily be done in $\cO(N)$ time where $N=mn$, via dynamic programming. Interestingly, this time complexity cannot be significantly improved assuming popular conjectures such as the strong exponential time hypothesis (SETH)~\cite{DBLP:conf/focs/AbboudBW15,DBLP:journals/siamcomp/BackursI18,DBLP:conf/focs/BringmannK15}.
In fact, by now we seem to have a rather good understanding of the complexity of this problem for different similarity measures
and taking other parameters than the length of the both strings into the account, see~\cite{DBLP:conf/focs/BringmannK15}.   

\subparagraph*{Substring queries. } A natural direction after having determined the complexity of a particular problem on strings
is to consider the more general version in which we need to answer queries on substrings of the input string. This has been
done for alignment~\cite{DBLP:journals/corr/abs-0707-3619,DBLP:journals/tcs/Sakai19}, pattern matching~\cite{DBLP:journals/tcs/KellerKFL14,DBLP:conf/soda/KociumakaRRW15}, approximate pattern matching~\cite{unified}, dictionary matching~\cite{DBLP:conf/isaac/Charalampopoulos19,CD_CPM}, compression~\cite{DBLP:journals/tcs/KellerKFL14}, periodicity~\cite{DBLP:conf/spire/KociumakaRRW12,DBLP:conf/soda/KociumakaRRW15}, counting palindromes~\cite{DBLP:conf/spire/RubinchikS17}, longest common substring~\cite{amir19}, computing minimal and maximal suffixes~\cite{DBLP:journals/tcs/BabenkoGKKS16,DBLP:conf/cpm/Kociumaka16}, and computing the lexicographically $k$-th suffix~\cite{DBLP:conf/soda/BabenkoGKS15}.

\subparagraph*{Alignment oracles.}
Consider the shortest path from vertex $(i,j)$ to vertex $(a,b)$ in the alignment graph. It corresponds to the optimal alignment of two substrings: the substring of $S$ between indices $i$ and $j$ and the substring of $T$ between indices $a$ and $b$. 
An {\em alignment oracle} is a data structure that, after preprocessing, can report the optimal alignment score of any two substrings of $S$ and $T$. That is, given positions $i,j$ in $S$ and positions $a,b$ in $T$, the oracle returns the optimal alignment score of $S[i \dd j]$ and $T[a \dd b]$ (or equivalently, the $(i,j)$-to-$(a,b)$ distance in the alignment graph). 

Tiskin~\cite{DBLP:journals/corr/abs-0707-3619,DBLP:journals/jda/Tiskin08} considered a restricted variant of the problem, in which the queries are either the entire string $S$ vs. a  substring of $T$ or a prefix of $S$ vs. a suffix of $T$. 
For such queries, Tiskin gave an $\cOtilde(n+m)$-size oracle, that can be constructed in $\cOtilde(N)$ time, and answers queries in $\cO(\log N/ \log \log N)$ time~\cite{DBLP:journals/corr/abs-0707-3619}. 
For the general problem, Sakai~\cite{DBLP:journals/tcs/Sakai19} (building on Tiskin's work~\cite{DBLP:journals/corr/abs-0707-3619}) showed how to construct in $\cO(N)$ time an alignment oracle with $\cO(n+m)$ query time.
In this work we show that, perhaps surprisingly, obtaining such an oracle can be done essentially for free! That is, at almost the same time it takes to compute just the alignment of $S$ and $T$. More formally, our main result is: 

\begin{theorem}\label{thm:main}
For two strings of lengths $m$ and $n$, with $N=mn$, we can construct in $N^{1+o(1)}$ time an alignment oracle achieving either of the following tradeoffs:
\begin{itemize}
\item $N^{1+o(1)}$ space and $\log^{2+o(1)} N$ query time,
\item $N\log^{2+o(1)} N$ space and $N^{o(1)}$ query time.
\end{itemize}
\end{theorem}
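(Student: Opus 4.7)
My plan is to specialize the Voronoi-diagram-based planar distance oracle framework from the authors' STOC 2019 paper to the alignment graph $\AG$, closing the $\sqrt N$ gap incurred by Long and Pettie's black-box application by exploiting two structural features of $\AG$: (i) it is a subgrid graph whose edges only go rightward, downward, or diagonally, so recursive \emph{rectangular} decompositions can replace arbitrary planar $r$-divisions; and (ii) by Tiskin's semi-local LCS theory, the all-pairs distance matrix between the four sides of any axis-aligned rectangular subgraph of $\AG$ is representable by a \emph{unit-Monge} permutation matrix. Consequently, each dense distance graph (DDG) on $b$ boundary vertices should admit an implicit $\cOtilde(b)$-size representation that supports composition and single-entry queries in $\cOtilde(b)$ time, in place of the explicit $\Theta(b^2)$ table used in the black-box reduction.

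\textbf{Construction and query.} First, I would build an $r$-division hierarchy of $\AG$ made of axis-aligned rectangular pieces, with $\cO(\log\log N)$ levels at geometrically growing sizes $r_\ell$; a level-$\ell$ piece has $\cO(r_\ell)$ internal vertices and a boundary formed by four monotone chains of total length $\cO(\sqrt{r_\ell})$, and the whole hierarchy is produced in linear total time since the cuts are purely geometric. Second, for every piece at every level I would compute the implicit unit-Monge DDG of its boundary by applying Tiskin's seaweed composition to the DDGs of its children, at $\cOtilde(\sqrt{r_\ell})$ per piece and $N^{1+o(1)}$ in aggregate. Third, on top of each DDG I would build the additive-weight $\VDin$ and $\VDout$ Voronoi diagrams together with the STOC 2019 point-location structure; since each Voronoi sweep is dominated by the cost of the underlying DDG operations, this phase is again $N^{1+o(1)}$. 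To answer a query for $(i,j)$ versus $(a,b)$, I would locate the innermost piece of appropriate size containing $(i,j)$, treat $(i,j)$ as an additional source on its boundary, and use $\VDout$ followed by the DDG hierarchy to recover the $(i,j)$-to-$(a,b)$ distance in $\log^{2+o(1)} N$ time; the second tradeoff ($N\log^{2+o(1)} N$ space, $N^{o(1)}$ query) falls out by choosing fewer, geometrically larger levels in the hierarchy, exactly as in STOC 2019.

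\textbf{Main obstacle.} The delicate point, which I expect to occupy most of the paper, is that the unit-Monge structure is easy to state for a single axis-aligned rectangular piece but must be preserved as we move up the hierarchy: the boundary of an ancestor piece is a cyclic sequence of arcs inherited from its children, the Voronoi diagrams require this cyclic view, while Tiskin-style seaweed machinery wants a pair of monotone chains. I would handle this by decomposing every boundary into its four sides (top, right, bottom, left), maintaining a unit-Monge matrix for each ordered pair of sides, and then proving that both seaweed composition across children and the MSSP-style sweep that builds each Voronoi diagram lift to this four-chain representation with only $\log^{\cO(1)} N$-factor overhead. Once this interface between unit-Monge DDGs and cyclic Voronoi diagrams is in place, the remaining parts of the oracle---point location, hierarchy traversal, and the space/time tradeoff analysis---should be a mostly mechanical reuse of the STOC 2019 machinery, now applied to implicitly represented DDGs of near-linear rather than near-quadratic size.
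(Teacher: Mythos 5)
Your proposal identifies the right high-level paradigm (recursive rectangular decomposition of the alignment graph plus additively weighted Voronoi diagrams at each level), but the central technical step you propose is not the one the paper uses, and as sketched it would not close the gap.

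Your key idea is to replace the explicit $\Theta(b^2)$ dense distance graphs by implicit unit-Monge/seaweed representations of size $\cOtilde(b)$, and then to claim that ``each Voronoi sweep is dominated by the cost of the underlying DDG operations.'' That last clause is precisely where the argument breaks. The $\Omega(N^{3/2})$ construction cost in Long--Pettie does not come from storing or composing the DDGs (those are already near-linear in aggregate over a recursive $r$-division); it comes from building, for each boundary vertex $u$ of each piece, the Voronoi diagram (and the associated point-location structure) of the exterior region with sites the boundary of an ancestor piece. There are $\Theta(N/\sqrt{r})$ such diagrams per level, each with $\Theta(\sqrt{r})$ sites, and the existing planar-duality-based Voronoi construction does not become faster merely because the DDGs are represented implicitly. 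Nothing in your sketch explains how to compute the Voronoi cell boundaries (or the dual tree used for point location in the STOC 2019/Long--Pettie oracles) in time roughly linear in the number of sites rather than in the size of the region; your appeal to seaweed composition handles pairwise boundary distances, not Voronoi diagrams.

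The paper avoids unit-Monge/seaweed machinery entirely and abandons the dual representation of Voronoi diagrams. Instead it exploits the ``double staircase'' shape of Voronoi cells in the alignment graph (Lemma~\ref{lem:contig} and Corollary~\ref{cor:structure}): each cell $\Vor(s)$ has a unique bottom-right vertex $\last{s}$, and the whole diagram is encoded by the pairs $(\last{s},\med{s,\last{s}})$ together with MSSP data structures. Point location is then a binary search over the ordered family of $s$-to-$\last{s}$ paths maintained in a persistent BST, using $\cO(\log N)$-time left/right/ancestor queries supported by MSSP. The crucial construction primitive is the divide-and-conquer procedure \textsc{Zoom}, which locates all $\last{s}$ by recursing on rectangular pieces of $\AG$ and, at each step, prunes the candidate-site sequence to a \emph{safe} subsequence (Lemma~\ref{lem:remove}, Lemma~\ref{lem:safe}) whose length is proportional to the number of $\last{s}$ actually contained in the piece. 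This is what yields a per-diagram cost of $\cOtilde(|\bot{Q}|)$ distance queries (Lemma~\ref{lem:partition}), answered bottom-up using the already-built levels (Observation~\ref{obs:limited}), and hence $N^{1+o(1)}$ total construction time. Your four-chain unit-Monge bookkeeping is not needed and, more importantly, would not by itself reproduce this near-linear Voronoi-construction bound.

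Finally, a smaller but real discrepancy: the oracle's stored ingredients per piece are two MSSP data structures (a reverse MSSP inside $P$ and an MSSP for $Q\setminus(P\setminus\partial P)$) plus the Voronoi representations above; there is no $\VDin$/$\VDout$ pair, and queries do not pass through a DDG hierarchy at all but rather recover $\med{u,v}$ via at most two candidate sites per level (\textsc{GetNextCandidates}) and then two MSSP lookups. If you want to salvage your unit-Monge idea, you would need to show how to extract and compose not just distances but the cell-boundary (or $\last{s}$) information from seaweed matrices in near-linear time; the paper sidesteps this by never forming the DDGs for the Voronoi construction in the first place.
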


\subparagraph*{Planar distance oracles and Voronoi diagrams.}
The starting point of our work is the recent developments in {\em distance oracles} for planar graphs. A distance oracle is a compact representation of a graph that allows to efficiently query the distance between any pair of vertices. 
 Indeed, since the alignment graph is a planar graph, the state-of-the-art distance oracle for planar graphs of Long and Pettie~\cite{DBLP:journals/corr/abs-2007-08585} (which builds upon~\cite{DBLP:conf/soda/GawrychowskiMWW18,DBLP:conf/stoc/Charalampopoulos19,DBLP:conf/focs/Cohen-AddadDW17}) is an alignment oracle with space $N^{1+o(1)}$ and query time $\cO(\log^{2+o(1)} N)$. However, the construction time of this oracle is $\Omega(N^{3/2})$. Our main contribution  is an improved $N^{1+o(1)}$ construction time when the underlying graph is not just a planar graph but an alignment graph. 

Our oracle has the same recursive structure as the planar graph oracles in~\cite{DBLP:conf/soda/GawrychowskiMWW18,DBLP:conf/stoc/Charalampopoulos19,DBLP:journals/corr/abs-2007-08585} (in fact, the alignment graph, being a grid, greatly simplifies several technical, but standard, difficulties of the recursive structure). These oracles (inspired by Cabello's use of Voronoi diagrams for the diameter problem in planar graphs~\cite{DBLP:journals/talg/Cabello19}) use the recursive structure in order to apply (at different levels of granularity) an efficient mechanism for {\em point location on Voronoi diagrams}.  
At a high level, a {\em Voronoi diagram} with respect to a subset $S$ of vertices (called sites) is a partition of the vertices into $|S|$ parts (called Voronoi cells), where the cell of site $s \in S$ contains all vertices that are closer to $s$ than to any other site in $S$. 
A {\em point location} query, given a vertex $v$, returns the site $s$ such that $v$ belongs to the Voronoi cell of~$s$. Our main technical contribution is a polynomially faster construction of the point location mechanism when the underlying graph is an alignment graph. We show that,  in this case, the special structure of the Voronoi cells facilitates point location via a non-trivial divide and conquer. 
Unlike the planar oracles, which use planar duality to represent Voronoi diagrams, the representation and point location mechanisms we develop in this paper are novel and achieve the same query time, while being arguably simpler than those of Long and Pettie.\footnote{We believe that our efficient construction can also be made to work, for alignment graphs, with the dual representation of Voronoi diagrams used in~\cite{DBLP:conf/soda/GawrychowskiMWW18,DBLP:conf/stoc/Charalampopoulos19,DBLP:journals/corr/abs-2007-08585}, but we think the new representation makes the presentation more approachable as it exploits the structure of the alignment graph more directly.}   

It is common that techniques are originally developed for pattern matching problems (and in particular alignment problems) and later extended to planar graphs.
A concrete example is the use of Monge matrices and unit-Monge matrices. However, it is much less common that  techniques are first developed for planar graphs (in our case, the use of Voronoi diagrams) and only then translated to pattern matching problems. 

\subparagraph*{Conditional lower bounds.}
Any lower bound on the time required to compute an optimal alignment of two strings directly implies an analogous lower bound for the sum of the preprocessing time and the query time of an alignment oracle. In particular, the existence of an oracle for which this sum is $\cO(N^{1-\epsilon})$, for a constant $\epsilon>0$, would refute SETH~\cite{DBLP:journals/siamcomp/BackursI18,DBLP:conf/focs/BringmannK15}.

In the {\em Set Disjointness} problem, we are given a collection of $m$ sets $A_1, A_2, \ldots,A_m$ of total size $M$ for preprocessing. We then need to report, given any query pair $A_i,A_j$, whether $A_i \cap A_j = \emptyset$.
The Set Disjointness conjecture~\cite{DBLP:conf/wads/GoldsteinKLP17,DBLP:journals/siamcomp/PatrascuR14,DBLP:journals/corr/abs-1006-1117} states that any data structure with constant query time must use $M^{2-o(1)}$ space.
Goldstein et al.~\cite{DBLP:conf/wads/GoldsteinKLP17} 
stated the following stronger conjecture.

\begin{conjecture}[Strong Set Disjointness Conjecture~\cite{DBLP:conf/wads/GoldsteinKLP17}]
Any data structure for the Set Disjointness problem that answers queries in time $t$ must use space $M^2/(t^2 \cdot \log^{\cO(1)}M)$.
\label{conj:SSDC}
\end{conjecture}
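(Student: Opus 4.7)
The final statement is a \emph{conjecture} due to Goldstein, Kopelowitz, Lewenstein, and Porat, carried over from the original source as a conditional hardness hypothesis; it is not a theorem for which a proof can be supplied in this paper, and no unconditional proof is currently known. The plan below therefore describes the approach I would take to \emph{attempt} such a proof, together with the reason I expect it to fall short.

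The natural setting is the cell-probe model. I would fix a hard distribution on inputs (the $m$ sets of total size $M$) together with a query distribution on pairs $(i,j)$, and argue that every deterministic data structure of space $S$ that answers using $t$ probes per query must err on a nonnegligible fraction of queries. Two complementary lines of attack are standard, and I would combine them. First, via asymmetric communication complexity: view the querier as Alice sending $\cO(t \log S)$ bits of probe addresses and the database as Bob responding with $\cO(t \log S)$ bits of cell contents, then reduce from lopsided Set Disjointness and apply the round-elimination lemma of Miltersen et al.\ to transfer the known asymmetric communication lower bound into the cell-probe model. Second, direct information-theoretic accounting in the style of P\v{a}tra\c{s}cu--Thorup chronograms, relating the number of probed bits to the entropy of the answers on the hard distribution. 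In both cases I would calibrate the distribution so that each query on average reveals $\Omega(M/S)$ bits of information about the input, forcing $t \log S$ to be that large.

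The main obstacle is the gap between what these techniques can currently deliver and what the conjecture demands. State-of-the-art unconditional cell-probe lower bounds for static problems top out at polylogarithmic space--time tradeoffs, with Larsen's $\Omega(\log^2 n/\log\log n)$ bound being the high-water mark. The Strong Set Disjointness Conjecture, by contrast, asserts the \emph{polynomial} relation $S \cdot t^2 \geq M^{2-o(1)}/\mathrm{polylog}(M)$, and even its base case $t = \cO(1)$ requiring $S = M^{2-o(1)}$ is itself an open conjecture. I would therefore expect any direct attack along the lines above to deliver at best a polylogarithmic tradeoff, and a full proof would constitute a breakthrough in static data-structure lower bounds. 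The practical course of action, and the one I would follow, is to adopt the statement as an axiom and then use it downstream to derive matching conditional lower bounds on the space--time product of any alignment oracle via a reduction from Set Disjointness.
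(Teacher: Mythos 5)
You correctly recognize that this is a conjecture adopted as a hardness hypothesis rather than a theorem to be proved, which is exactly how the paper treats it: it is cited from Goldstein et al.\ and used downstream only to derive the conditional lower bound on alignment oracles. Your discussion of why an unconditional proof is out of reach is sound but unnecessary for the paper's purposes; no proof is given or expected here.
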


The following theorem implies that, conditioned on the above conjecture, our alignment oracle is optimal up to subpolynomial factors; its proof is identical to that of~\cite[Theorem 1]{amir19} as explained in~\cite{panosthesis}.

\begin{theorem}[\cite{amir19,panosthesis}]
An alignment oracle for two strings of length at most $n$ with query time $t$ must use $n^2/(t^2 \cdot \log^{\cO(1)}n)$ space, assuming the Strong Set Disjointness Conjecture.\end{theorem}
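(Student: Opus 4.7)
The plan is a reduction from Set Disjointness to the alignment oracle problem. Given an instance of Set Disjointness, namely sets $A_1,\ldots,A_m$ over a universe $U$ with total size $M=\sum_i |A_i|$, I would construct two strings $S$ and $T$, each of length $n=\Theta(M)$, by concatenating encoded blocks $B_1,B_2,\ldots,B_m$ separated by fresh delimiter symbols, where each $B_i$ is a fixed-format encoding of $A_i$. The delimiters ensure that any optimal alignment of the substring $B_i$ of $S$ with the substring $B_j$ of $T$ must match like with like, so that the alignment score of $B_i$ and $B_j$ depends only on $A_i$ and $A_j$. Critically, the endpoints of each $B_i$ inside $S$ (and each $B_j$ inside $T$) are known at construction time, so a single alignment oracle query on substrings of $S$ and $T$ suffices to retrieve the score for any pair $(A_i,A_j)$.

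The key design requirement is that the alignment score of $B_i$ and $B_j$ be a known (strictly monotone) function of $|A_i \cap A_j|$, so that the binary decision ``$A_i \cap A_j = \emptyset$?'' is recoverable from a single numerical answer. A concrete realization encodes each element of $U$ by a distinct symbol and lists the elements of a set separated by another dedicated symbol; for LCS the score of $B_i$ and $B_j$ then differs from a function of $|A_i \cap A_j|$ only by an additive constant depending on $|A_i|$ and $|A_j|$, and analogously for edit distance. This is precisely the reduction used in the proof of Theorem~1 of~\cite{amir19} for the substring longest common substring oracle, and, as observed in~\cite{panosthesis}, it transfers verbatim to alignment oracles since a substring-to-substring query is exactly what selects the encoded blocks.

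Combining the pieces, any alignment oracle with space $s$ and query time $t$ on two strings of length $n=\Theta(M)$ yields, via the reduction above, a Set Disjointness data structure with space $\cO(s)$ and query time $\cO(t)$. Applying Conjecture~\ref{conj:SSDC} then forces $s = \Omega\!\left(M^2/(t^2 \log^{\cO(1)} M)\right) = \Omega\!\left(n^2/(t^2 \log^{\cO(1)} n)\right)$, which is the claimed bound.

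The main obstacle is the design of the encoding: it must simultaneously (i) keep the total string length linear in $M$ so that $n=\Theta(M)$, (ii) allow each $B_i$ to sit as a contiguous substring so that the oracle can be queried by specifying two pairs of indices, and (iii) guarantee that the alignment of $B_i$ and $B_j$ cleanly exposes $|A_i \cap A_j|$, independently of the rest of $S$ and $T$. All three are achieved by the separator-based construction of~\cite{amir19}, so no genuinely new idea is required beyond invoking that construction and verifying that the parameters translate with $M=\Theta(n)$.
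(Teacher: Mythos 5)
Your reduction matches the approach the paper invokes by citing~\cite{amir19,panosthesis}: the paper gives no explicit proof, but the construction in those references is precisely the Set Disjointness reduction via concatenated encoded blocks that you describe, and the parameter translation $M=\Theta(n)$ carries Conjecture~\ref{conj:SSDC} directly to the stated bound. One small observation: since the oracle query supplies the exact endpoints of $B_i$ in $S$ and $B_j$ in $T$, the inter-block delimiters cannot influence the returned value and are therefore superfluous; with the single-symbol-per-element encoding and sorted blocks one already has $\LCS(B_i,B_j)=|A_i\cap A_j|$, which cleanly decides disjointness.
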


Even though the main point of interest is in oracles that achieve fast (i.e. constant, polylogarithmic, or subpolynomial) query-time, the above lower bound suggests to study other tradeoffs of space vs.~query-time. 
In~\cref{sec:subquadratic} we show oracles with space sublinear in $N$. More formally, we prove the following theorem. 

\begin{restatable}{theorem}{frbased}\label{thm:frbased}
Given two strings of lengths $m$ and $n$ with $N=mn$, integer alignment weights upper-bounded by $w$, and a parameter $r \in [\sqrt{N},N]$ we can construct in $\cOtilde(N)$ time an $\cOtilde(Nw/ \sqrt{r}+m+n)$-space alignment oracle that answers queries in time $\cOtilde(\sqrt N+r)$.
\end{restatable}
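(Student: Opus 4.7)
The plan is to instantiate the standard $r$-division plus Fakcharoenphol--Rao (FR) Dijkstra recipe from subquadratic planar distance oracles on the alignment graph, exploiting that it is a grid with monotone shortest paths and that its piece-level boundary-to-boundary distance matrices are Monge. First I would partition the $(m{+}1)\times(n{+}1)$ vertex grid into $\Theta(N/r)$ pieces, each a $\sqrt{r}\times\sqrt{r}$ sub-grid (using rectangular pieces near the grid boundary when $\min(m,n)<\sqrt{r}$). Every piece then has $O(\sqrt{r})$ perimeter vertices, which I designate as its boundary.

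\textbf{Preprocessing.} For each piece $P$, I would compute a compact Monge-matrix representation of the all-pairs boundary-to-boundary distance matrix of $P$. Since $P$ is itself an alignment instance of size $r$ with integer weights at most $w$, Tiskin's semi-local LCS machinery (adapted to weighted alignment in the standard way) produces such a representation in $\cOtilde(r)$ time and $\cOtilde(\sqrt{r}\cdot w)$ words per piece. Summed over the $\Theta(N/r)$ pieces this gives $\cOtilde(N)$ preprocessing time and $\cOtilde(Nw/\sqrt{r})$ space, plus $O(m+n)$ for the strings themselves, matching the claimed space.

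\textbf{Query.} Given $(i,a)$ and $(j,b)$, locate the pieces $P_s$ and $P_t$ containing source and target, and run brute-force DP inside those two pieces to compute source-to-$\partial P_s$ and $\partial P_t$-to-target distances in $O(r)$ time each (this contributes the additive $r$ in the query bound). The remaining work is a shortest-path query on the DDG whose vertices are the boundary vertices of the pieces intersecting the query rectangle $[i,j]\times[a,b]$ and whose edges are supplied implicitly by the stored per-piece Monge matrices. I would execute FR-Dijkstra on this DDG, seeded with the source-to-$\partial P_s$ labels, and output $\min_{v\in \partial P_t}\bigl(\mathrm{dist}(s,v)+\mathrm{dist}(v,t)\bigr)$. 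Because the alignment graph is a DAG with monotone shortest paths, pieces can be processed in row-major topological order, each handled by a single Monge min-plus matrix--vector product.

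\textbf{Main obstacle.} The delicate step is bounding the FR-Dijkstra sweep by $\cOtilde(\sqrt{N})$, not the $\cOtilde(N/\sqrt{r})$ that the generic total-boundary analysis gives. I expect this to require using Tiskin's implicit unit-Monge (seaweed) representation so that consecutive per-piece min-plus products along a row of pieces can be pipelined with only polylogarithmic amortized overhead per piece, rather than the $\Theta(\sqrt{r})$ that SMAWK would naively incur. The grid regularity and shared-boundary structure of adjacent pieces in the alignment graph should make this pipelining clean, but the bookkeeping around the source/target pieces and the query-rectangle boundary is where most of the technical work lies. Once the sweep cost is driven down to $\cOtilde(N/r)$, combining with the $O(r)$ source/target term and using $r\ge\sqrt{N}$ yields $\cOtilde(N/r+r)=\cOtilde(\sqrt{N}+r)$, the claimed query time.
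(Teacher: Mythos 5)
Your preprocessing and space analysis are essentially on the right track, but the query algorithm has a genuine gap that the paper's hierarchical construction is designed to avoid. With a \emph{single-level} $r$-division, a query whose rectangle spans the whole grid touches $\Theta(N/r)$ pieces with total boundary $\Theta((N/r)\cdot\sqrt{r})=\Theta(N/\sqrt{r})$ vertices, so FR-Dijkstra (or a topological sweep with SMAWK inside each piece) costs $\cOtilde(N/\sqrt{r})$, which for $r=\sqrt{N}$ is $\cOtilde(N^{3/4})$, not $\cOtilde(\sqrt{N})$. You identify this yourself and propose pipelining min-plus products down to $\cOtilde(1)$ amortized per piece, but this is not substantiated: each piece has $\Theta(\sqrt{r})$ boundary vertices, and the distance labels on them must be materialized to feed the adjacent piece, which already costs $\Omega(\sqrt{r})$ per piece. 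Tiskin's seaweed/unit-Monge machinery, which you invoke to avoid this, composes \emph{implicit} matrices (not explicit distance vectors), and more importantly only applies to unit-Monge matrices arising from unweighted LCS; with arbitrary integer weights bounded by $w>1$ the DDG matrices are Monge with differences bounded by $w$, which is the property the paper uses (a 2D range tree over the $\cO(kw)$ non-zero second differences), but not unit-Monge, so seaweed composition does not apply.

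The paper's construction avoids the issue by keeping the \emph{entire recursive decomposition tree} $\AG'$ above the $r$-division, storing $\DDG_P$ for every piece $P$ at every level, not just the leaves (this is still $\cOtilde(N)$ time and, thanks to geometric decay, $\cOtilde(Nw/\sqrt{r})$ space). At query time it finds the lowest common ancestor $Q$ of the two leaf pieces $P_u$, $P_v$ in $\AG'$, and covers $Q$ by the two ``cones'' of $P_u$ and $P_v$: the siblings of their weak ancestors below $Q$. Each cone contains only $\cO(\log N)$ pieces whose boundary sizes grow geometrically up to $\cO(\sqrt{|Q|})\le\cO(\sqrt{N})$, so the union of DDGs fed to FR-Dijkstra has $\cO(\sqrt{N})$ boundary vertices total and the run takes $\cOtilde(\sqrt{N})$ time. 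Adding the $\cO(r)$-time brute-force DP to connect $u$ (resp.~$v$) to $\partial P_u$ (resp.~$\partial P_v$) yields $\cOtilde(\sqrt{N}+r)$. To fix your proposal you would need to retain the DDGs at all levels of the decomposition and switch the query from a flat sweep to the cone-based FR-Dijkstra.
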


For example, if the alignment weights are constant integers, by setting $r=\sqrt{N}$ we obtain an $\cO(N^{3/4}+m+n)$-space oracle that answers queries in time $\cOtilde(\sqrt N)$.

\subparagraph*{Other related works.}
When the edit distance is known to be bounded by some threshold $k$, an efficient edit distance oracle can be obtained via the Landau-Vishkin algorithm~\cite{LandauV89}.
Namely, after $\cO(n+m)$ time preprocessing of the input strings (oblivious to the threshold $k$), 
given a substring of $S$, a substring of $T$, and a threshold $k$, in $\cO(k^2)$ time one can decide whether the edit distance of these substrings is at most $k$, and if so, return it.

Further, after an $\cO(n+m)$-time preprocessing, given a substring $X$ of $S$ and a substring $Y$ of $T$, the starting positions of substrings of $Y$ that are at edit distance at most $k$ from $X$ can be returned in $\cO(k^4\cdot |Y|/|X|)$ time~\cite{unified}.

In a recent work~\cite{DBLP:conf/cpm/Charalampopoulos20a} on dynamically maintaining an alignment, it was shown that, in the case where the alignment weights are small integers, two strings of total length at most $n$ can be maintained under edit operations in $\cOtilde(n)$ time per operation and per query (here a query asks for the alignment of the two current strings).

\section{Preliminaries}
The alignment oracle presented in this paper applies to both edit distance and longest common subsequence (LCS). To simplify the presentation we focus on LCS but the extension to edit distance is immediate. Namely, our main result (\cref{thm:main}) applies to arbitrary alignment weights and~\cref{thm:frbased}  applies to constant alignment weights.  

The LCS of two strings $S$ and $T$ is a longest string that is a subsequence of both $S$ and $T$. We denote the length of an LCS of $S$ and $T$ by $\LCS(S,T)$.

\begin{example}
An LCS of $S=\texttt{{\color{red}a}c{\color{red}bcd}daa{\color{red}e}a}$ and $T=\texttt{{\color{red}a}bb{\color{red}bc}c{\color{red}de}c}$ is $\texttt{\color{red}abcde}$; $\LCS(S,T)=5$.
\end{example}

For strings $S$ and $T$, of lengths $m$ and $n$ respectively (we will assume that $n \geq m$), the alignment graph $G$ of $S$ and $T$ is a directed acyclic graph of size $N=\cO(mn)$.  
For every $0 \leq x \leq m$ and $0\leq y \leq n$, the alignment graph $G$ has a vertex $(x,y)$ and the following unit-length edges (defined only if both endpoints exist):
\begin{itemize}
\item $((x,y),(x+1,y))$ and $((x,y),(x,y+1))$,
\item $((x,y),(x+1,y+1))$, present if and only if $S[x]=T[y]$.
\end{itemize}

Intuitively, $G$ is an $(m+1)\times (n+1)$ grid graph augmented with diagonal edges corresponding to matching letters of $S$ and $T$. See~\cref{fig:align}. 
We think of the vertex $(0,0)$ as the top-left vertex of the grid and the vertex $(m,n)$ as the bottom-right vertex of the grid.
We shall refer to the rows and columns of $G$ in a natural way. 
It is easy to see that $\LCS(S,T)$ equals $n+m$ minus the length of the shortest path from $(0,0)$ to $(m,n)$ in $G$.

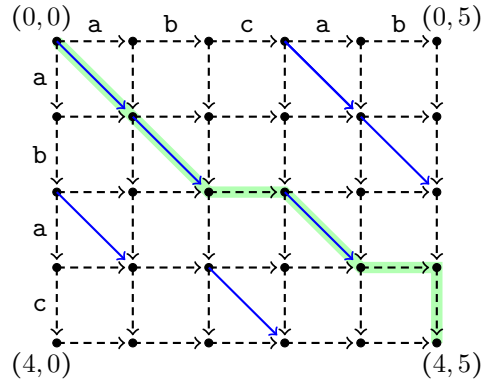
\begin{figure}[ht]
\captionsetup{singlelinecheck=off}
\centering
\begin{tikzpicture}

\draw[green!30!white,line width=1.5mm] (0,4) -- (1,3) -- (2,2) -- (3,2) -- (4,1) -- (5,1) -- (5,0);
  
\foreach \x in {0, 1,2,3,4,5}{
   \foreach \y in {0, 1,2,3,4}{
   \filldraw(\x,\y) circle (1.5pt);
   }
}

\foreach \x/\z in {0/1, 1/2, 2/3, 3/4, 4/5}{
   \foreach \y in {0,1,2,3,4}{
	\draw[->,densely dashed,thick] (\x,\y) -- (\z-0.1,\y);
   }
}

\foreach \x/\z in {0/1, 1/2, 2/3, 3/4}{
   \foreach \y in {0,1,2,3,4,5}{
	\draw[->,densely dashed,thick] (\y,\z) -- (\y,\x+0.1);
   }
}

\foreach \x/\y in {0/4, 3/4, 0/2, 3/2, 4/3, 3/4, 1/3, 2/1}{
	\draw[->,thick, blue] (\x,\y) -- (\x+0.9,\y-0.9);
   }
   
\draw (-0.2,4) node[above] {$(0,0)$};
\draw (5.2,4) node[above] {$(0,5)$};

\draw (5.2,0) node[below] {$(4,5)$};
\draw (-0.2,0) node[below] {$(4,0)$};

\foreach \x/\c in {0.5/a, 1.5/b, 2.5/c, 3.5/a, 4.5/b}{
   \draw (\x,4) node[above] {\large $\texttt{\c}$};
}

\foreach \x/\c in {0.5/c,1.5/a, 2.5/b, 3.5/a}{
   \draw (0,\x) node[left] {\large $\texttt{\c}$};
}
\end{tikzpicture}
\caption[The alignment graph.]{The alignment graph for $S=\texttt{abac}$ and $T=\texttt{abcab}$. 
We represent the horizontal and vertical edges by dashed black arrows, and the diagonal edges by blue arrows.
A lowest scoring $
(0,0)$-to-$(4,5)$ path is highlighted in green, it has weight $6$ and corresponds to the LCS $\texttt{aba}$ of length $3=9-6=|T|+|S|-6$.}\label{fig:align}
\end{figure}

\subparagraph*{Multiple-source shortest paths.} Given a planar graph with $N$ vertices and a distinguished face $h$, the multiple-source shortest paths (MSSP) data structure~\cite{MSSP,DBLP:journals/siamcomp/CabelloCE13} represents all shortest path trees rooted at the vertices of $h$ using a persistent dynamic tree.
It can be constructed in $\cO(N\log N)$ time, requires $\cO(N\log N)$ space, and can report the distance between any vertex $u$ of $h$ and any other vertex $v$ in the graph in $\cO(\log N)$ time.
The MSSP data structure can be augmented at no asymptotic overhead (cf.~\cite[Section 5]{DBLP:journals/talg/KaplanMNS17}), to allow for the following.
First, to report a shortest $u$-to-$v$ path $\rho$ in time $\cO(|\rho| \log\log \Delta)$, where $\Delta$ is the maximum degree of a vertex in $G$.
Second, to support the following queries in $\cO(\log N)$ time~\cite{DBLP:conf/soda/GawrychowskiMWW18}: Given two vertices $u,v \in G$ and a vertex $x$ of $h$ report whether $u$ is an ancestor of $v$ in the shortest path tree rooted at $x$, and whether $u$ occurs before $v$ in a preorder traversal of this tree.
(We consider shortest path trees as ordered trees with the order inherited from the planar embedding.)

\subparagraph*{Recursive decomposition.} We assume without loss of generality that the length of each of the two strings is a power of $2$, and hence the alignment graph is a $(2^a+1)\times (2^b+1)$ grid.
We consider a recursive decomposition $\AG$ of $G$ such that in each level all pieces are of the same rectangular shape.
At each level, each piece will be of size $(2^c +1)\times (2^d +1)$ for non-negative integers $c$ and $d$. 
Consider a piece $P$ of size $(2^c +1)\times (2^d +1)$, with $c+d\neq 0$. 
Assuming without loss of generality that $c\geq d$, in the next level we will partition $P$ to two pieces, each of size $(2^{c-1} +1)\times (2^d +1)$, that share the middle row of $P$. See~\cref{fig:grid}.
We view $\AG$ as a binary tree and identify a piece $P$ with the node corresponding to it in $\AG$.

\begin{figure}[h]
\begin{center}
\includegraphics[page=1,scale=0.35]{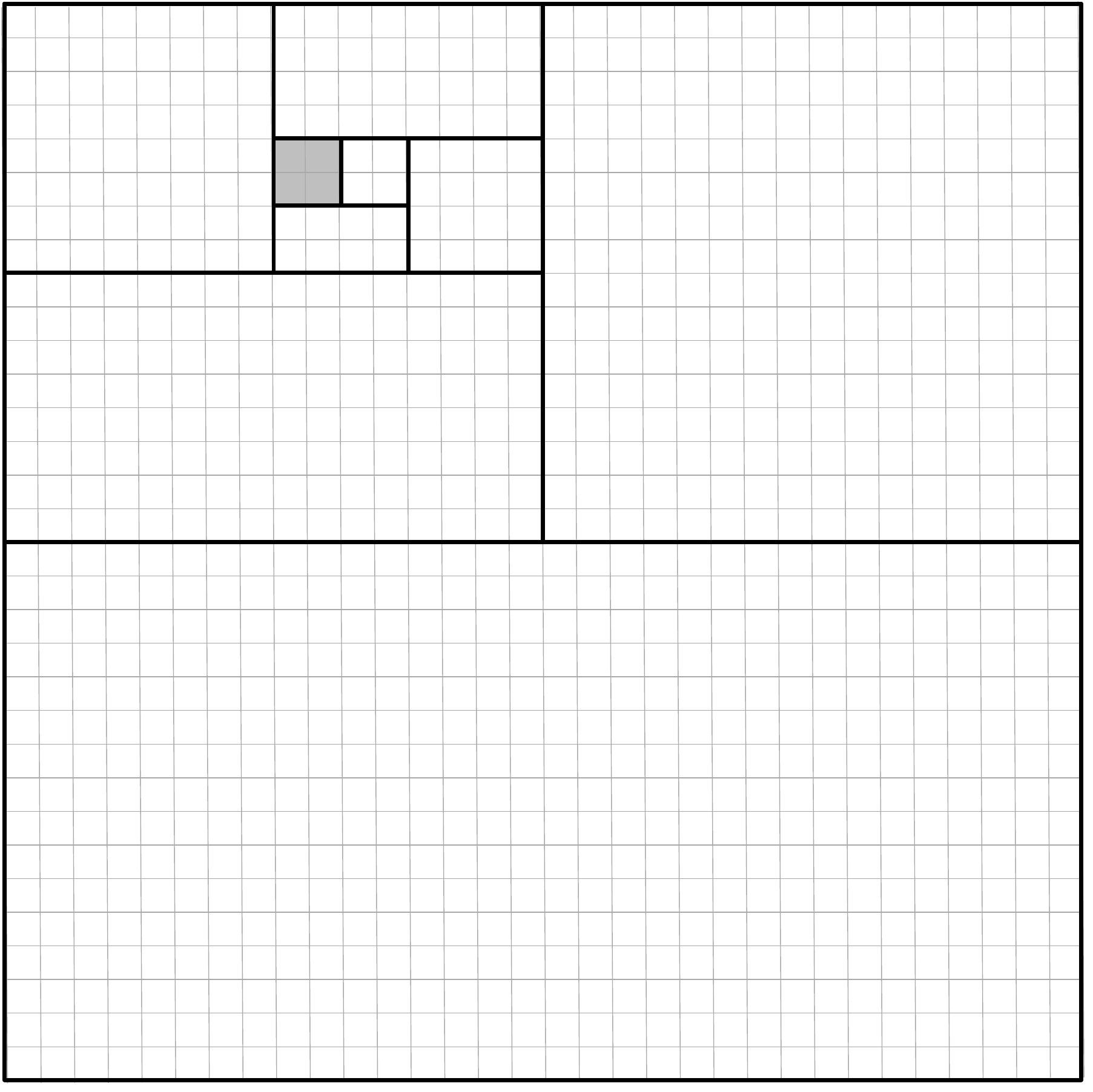}
\caption{Illustration of pieces in a recursive decomposition of the alignment graph. Diagonal edges are not show to avoid clutter. All rectangular pieces that contain the gray square form a single root-to-leaf path in the tree $\AG$.} \label{fig:grid}	
\end{center}
\end{figure}

Consider a piece $P \in \AG$.
The set $\partial P$ of \emph{boundary vertices} of a piece $P$ consists of those vertices who have neighbours that are not in $P$.
We call the vertices in $P\setminus \partial P$ the \emph{internal vertices} of $P$.
We denote by $\bot{P}$ the set of boundary vertices of $P$ that are either rightmost or bottommost in $P$, and by $\top{P}$ the set of boundary vertices of $P$ that are either leftmost or topmost in $P$.
We consider each of $\bot{P}$ and $\top{P}$ to be ordered, such that adjacent vertices are consecutive and the earliest vertex is the bottom-left one.
Therefore, whenever convenient, we refer to subsets of $\bot{P}$ and $\top{P}$ as sequences.
We define the {\em outside} of $P$, denoted by $\out{P}$, to be the set of vertices of $G \setminus (P\setminus \partial P)$ that are reachable from some vertex in $P$, i.e.~the vertices of $G$ that are not internal in $P$ and are to the right or below some vertex of $P$.
Note that any path from a vertex $u\in P$ to a vertex $v \in \out{P}$ must contain at least one vertex from $\bot{P}$, and any path from a vertex $u \not\in P$ to a vertex $u \in P$ must contain at least one vertex from $\top{P}$.

For any $r \in [1,nm]$, an $r$-{\em division} of $G$ is a decomposition of $G$ to pieces of size $\cO(r)$, each with $\cO(\sqrt{r})$ boundary vertices.
Clearly, such a decomposition can be retrieved from $\AG$, with all nodes being of the same depth.
In particular, we will use recursive $(r_t, \ldots, r_1)$-divisions, where for every $i<t$, each piece of the $r_i$-division must be contained in some piece of the $r_{i+1}$-division.
By convention, we will have $r_t$ being a single piece consisting of the entire graph~$G$.
Such a recursive division can be materialized as follows: 
First, we select the appropriate depth of $\AG$ for each $r_i$-division and mark all nodes of this depth.
Then, we contract every edge of $\AG$ both of whose endpoints are not marked.
Such a recursive $(r_t, \ldots, r_1)$-division can thus also be represented by a tree, which we will denote by $\TG$.

\begin{figure}[h]
\begin{center}
\includegraphics[page=4,scale=0.55]{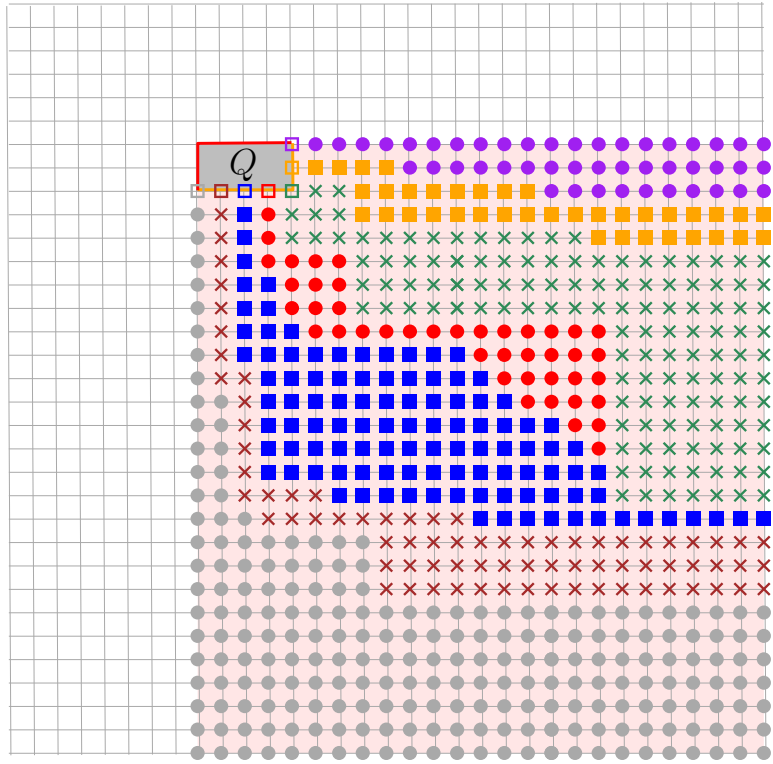}
\caption{A piece $Q$ (shaded gray). $\top Q$ is indicated by a red line, and $\bot Q$ by an orange line. $\out{Q}$ is shaded pink.
The Voronoi diagram for $\out Q$ with sites $\bot Q$ (boxes) is also illustrated. Each site has a distinct color. Vertices in each Voronoi cell are indicated by a matching color.
}\label{fig:piece} 	
\end{center}
\end{figure}

\subparagraph*{Voronoi diagrams.}
Let $H$ be a directed planar graph with real edge-lengths, and no negative-length cycles. Let $h$ be a face of $H$, 
and let $S$ be the set of vertices (called {\em sites}) of $h$. Each site $s\in S$ has a weight $\weight(s) \geq 0$ associated with it. The additively weighted distance $d^\weight(s,v)$ between a site $s \in S$ and a vertex $v \in H$ is defined as $\weight(s)$ plus the length of the shortest $s$-to-$v$ path in $H$.

The {\em additively weighted Voronoi diagram} $\VD(S, \weight)$ of $H$ is a partition of the vertices of $H$ into pairwise disjoint sets, one set $\Vor(s)$ for each site $s \in S$. The set $\Vor(s)$, called the {\em Voronoi cell} of $s$, contains all vertices of $H$ that are closer (w.r.t. $d^\weight(.,.)$) to $s$ than to any other site in $S$.
If $v \in \Vor(s)$ then we call $s$ the site of $v$, and say that $v$ belongs to the site $s$.
Throughout the paper, we will only consider additively weighted Voronoi diagrams for the outside $\out{P}$ of a piece $P \in \AG$ with sites $S \subseteq\bot{P}$. We next discuss the structure of such Voronoi diagrams.

We resolve ties between sites in favor of the site $s=(x,y)$ for which $(\weight(s),x,y)$ is lexicographically largest.
Since the alignment graph is planar, this guarantees that the vertices in $\Vor(s)$ are spanned by a subtree of a shortest paths tree rooted at $s$:
for every vertex $v \in \Vor(s)$, for any vertex $u$ on a shortest $s$-to-$v$ path, we must have $u \in \Vor(s)$.
Hence, each Voronoi cell is a simply connected region of the plane.
The structure of the alignment graph dictates that any shortest path is monotone in the sense that it only goes right and/or down. This property immediately implies the following lemma.
\begin{lemma} \label{lem:contig}
For any $a\leq c$ and $d\leq f$, if $u=(a,f)$ and $v=(c,d)$ both belong to $\Vor(s)$ then every vertex $w=(b,e)$ with $a\leq b\leq c$ and $d\leq e \leq f$ also belongs to $\Vor(s)$.
\end{lemma}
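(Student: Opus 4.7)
The plan is to argue by contradiction. Assume $w\notin\Vor(s)$, so $w\in\Vor(s')$ for some other site $s'$; by the definition of the Voronoi diagram (with the lexicographic tie-break), $d^\weight(s,u)\leq d^\weight(s',u)$ and $d^\weight(s',w)\leq d^\weight(s,w)$, each with strict priority in at least one direction. My goal is to exhibit a vertex $z$ lying simultaneously on some shortest $s$-to-$u$ path and on some shortest $s'$-to-$w$ path (the symmetric case with $v$ replacing $u$ is handled identically). Given such a $z$, the standard path-swap argument closes the proof: the walks $s\to z\to w$ and $s'\to z\to u$ obtained by splicing a prefix of one shortest path with a suffix of the other have total length $d(s,u)+d(s',w)$, yielding $d^\weight(s,w)+d^\weight(s',u)\leq d^\weight(s,u)+d^\weight(s',w)$; combining this with $d^\weight(s,u)\leq d^\weight(s',u)$ forces $d^\weight(s,w)\leq d^\weight(s',w)$, and together with $d^\weight(s',w)\leq d^\weight(s,w)$ the resulting equality case is ruled out by the tie-breaking priorities implicit in $u\in\Vor(s)$ and $w\in\Vor(s')$.

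To produce the shared vertex $z$, I rely on monotonicity of shortest paths plus planar topology. By the subtree property of Voronoi cells stated just before the lemma, there is a shortest $s$-to-$u$ path $\pi_u\subseteq\Vor(s)$; exploiting the flexibility this offers, I choose $\pi_u$ so that it descends to row $a$ at the leftmost possible column and then runs along row $a$ to $u$, and dually $\pi_v\subseteq\Vor(s)$ so that it reaches column $d$ at the topmost possible row and then descends along column $d$ to $v$. Similarly, I pick a shortest $s'$-to-$w$ path $\pi'\subseteq\Vor(s')$. Since $u=(a,f)$ is strictly northeast of $w=(b,e)$ while $v=(c,d)$ is strictly southwest, and since $\pi'$ is monotone (only right/down steps), the union $\pi_u\cup\pi_v$ together with a closing arc drawn through $\out{P}$ in the southeast exterior of the grid forms a Jordan curve in the planar embedding whose interior contains $w$ but not $s'$. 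Hence $\pi'$ must cross this Jordan curve, and since two monotone grid paths can cross one another only at a shared vertex, we obtain the required $z\in\pi'\cap(\pi_u\cup\pi_v)$.

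The main obstacle is the topological step of verifying that $w$ indeed lies inside the Jordan region while $s'$ lies outside. This uses the fact that both $s$ and $s'$ sit on the southeast staircase $\bot{P}$ and requires a short case analysis on their relative order within $\bot{P}$, together with a careful routing of the closing arc through $\out{P}$ so as to avoid interfering with any monotone $s'$-to-$w$ path (so that the crossing of $\pi'$ with the Jordan curve must occur on $\pi_u\cup\pi_v$ rather than on the closing arc).
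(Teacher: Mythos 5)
Your high-level plan---locate a vertex $z$ shared by a monotone path from $s$ and a monotone path from $s'$, then derive a contradiction---is exactly the spirit of the paper's one-line proof. However, the way you instantiate the two $s$-paths introduces a genuine bug. You invoke the subtree property to get $\pi_u,\pi_v\subseteq\Vor(s)$, but then claim you may ``exploit the flexibility'' to pick $\pi_u$ as the shortest $s$-to-$u$ path that drops to row $a$ as far left as possible, and $\pi_v$ as the one that reaches column $d$ as high up as possible. The subtree property fixes a \emph{single} shortest-path tree rooted at $s$; only the tree paths are guaranteed to lie in $\Vor(s)$, and you do not get to reshape them. Worse, the two extremal paths you describe need not be disjoint after diverging: with $s=(0,0)$, $u=(1,5)$, $v=(5,1)$ (and no diagonal edges), your $\pi_u$ is $(0,0),(1,0),(1,1),\dots,(1,5)$ and your $\pi_v$ is $(0,0),(0,1),(1,1),\dots,(5,1)$, which cross transversally at $(1,1)$. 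Then $\pi_u\cup\pi_v$ plus a closing arc is not a Jordan curve, and the region it is supposed to bound is the full box $[a,c]\times[d,f]$, which in general can contain other sites of $\bot{Q}$, so your ``$s'$ lies outside'' step can fail outright. The paper avoids all of this by taking $\pi_u,\pi_v$ to be the tree paths in the fixed shortest-path tree whose existence the subtree property asserts; these share a prefix and are vertex-disjoint afterwards, so together with any far-southeast closing arc they bound a region that contains $w$ but no vertex of $\bot{Q}\setminus\pi_u\setminus\pi_v$, and in particular not $s'$.

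Two further remarks. First, once you have tree paths $\pi_u,\pi_v\subseteq\Vor(s)$ and $\pi'\subseteq\Vor(s')$ and a shared vertex $z$, the contradiction is immediate from the disjointness of Voronoi cells ($z\in\Vor(s)\cap\Vor(s')=\emptyset$); the quadrangle/path-swap inequality and the tie-breaking case analysis you invoke are a correct but unnecessary detour, and in fact the swap is the only thing rescuing your argument from the fact that your chosen $\pi_u$ need not lie in $\Vor(s)$. Second, you explicitly leave the ``$w$ inside, $s'$ outside'' verification as an unresolved obstacle; that is precisely the content of the lemma, so as written the proof is incomplete even setting aside the path-choice issue.
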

\begin{proof}
Suppose $w=(b,e)$ belongs to $\Vor(s')$ for some $s' \neq s$.
Since shortest paths only go right and down, the shortest $s'$-to-$w$ path must cross either the $s$-to-$u$ path or the $s$-to-$v$ path, which is a contradiction. 
\end{proof}

Lemma~\ref{lem:contig} together with the fact that $\Vor(s)$ is connected implies the following characterization of the structure of $\Vor(s)$, which roughly says that $\Vor(s)$ has the form of a double staircase, as illustrated in~\cref{fig:piece}.

\begin{corollary}\label{cor:structure}
For any row $a$ and any site $s$, the vertices of row $a$ that belong to $\Vor(s)$ form a contiguous interval of columns $[i_a,j_a]$. 
Furthermore, the sequences $i_a$ and $j_a$ are monotone non-decreasing and $i_a \leq i_{a+1} \leq j_{a} \leq j_{a+1}$.
\end{corollary}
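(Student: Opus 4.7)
The plan is to establish the four parts of the statement --- (i) that $\Vor(s)$ intersects every row in a contiguous interval $[i_a,j_a]$, (ii) $i_a \leq i_{a+1}$, (iii) $j_a \leq j_{a+1}$, and (iv) $i_{a+1} \leq j_a$ --- by repeatedly invoking Lemma~\ref{lem:contig} and using the fact, noted just above that lemma, that the vertices of $\Vor(s)$ are spanned by a subtree of a shortest-path tree rooted at $s$ (and in particular form a connected subgraph of $G$).

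For (i), given any two vertices $(a,y_1),(a,y_2)\in\Vor(s)$ with $y_1\leq y_2$, I would apply Lemma~\ref{lem:contig} with $u=(a,y_2)$ and $v=(a,y_1)$; its hypotheses $a\leq a$ and $y_1\leq y_2$ hold trivially, so the conclusion puts every $(a,e)$ with $y_1\leq e\leq y_2$ into $\Vor(s)$. For (ii), I argue by contradiction: assuming $i_{a+1}<i_a$, the hypotheses $a\leq a+1$ and $i_{a+1}\leq i_a$ of Lemma~\ref{lem:contig} are met by $u=(a,i_a)$ and $v=(a+1,i_{a+1})$, and its conclusion forces $(a,i_{a+1})\in\Vor(s)$, contradicting the minimality of $i_a$. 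Claim (iii) is symmetric: assuming $j_{a+1}<j_a$ and invoking the lemma with $u=(a,j_a)$ and $v=(a+1,j_{a+1})$ places $(a+1,j_a)$ in $\Vor(s)$, contradicting the maximality of $j_{a+1}$.

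The overlap bound (iv) is the most delicate part, since Lemma~\ref{lem:contig} does not apply directly --- one lacks a northeast/southwest pair to anchor its rectangle. Here I plan to exploit the connectedness of $\Vor(s)$: if rows $a$ and $a+1$ both meet $\Vor(s)$, the spanning subtree must contain some edge going from row $a$ to row $a+1$, and in the alignment graph every such edge is either a vertical $(a,c)\to(a+1,c)$ or a diagonal $(a,c)\to(a+1,c+1)$. In the vertical subcase, both endpoints lie in $\Vor(s)$, so $c$ belongs simultaneously to $[i_a,j_a]$ and $[i_{a+1},j_{a+1}]$, yielding $i_{a+1}\leq c\leq j_a$ at once. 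The diagonal subcase is the step I expect to be the main obstacle: it only directly gives $c\leq j_a$ and $i_{a+1}\leq c+1$, leaving a potential off-by-one gap. To close it I would argue, via a careful shortest-path analysis combined with the tie-breaking rule, that when the diagonal edge is used by the subtree, the two-step detour through $(a,c+1)$ or $(a+1,c)$ is itself a shortest $s$-to-$(a+1,c+1)$ path, so that one of those two bypass vertices must also belong to $\Vor(s)$ by the subtree property, reducing the diagonal case to the vertical one.
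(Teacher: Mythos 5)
Your treatment of parts (i), (ii) and (iii) is correct and is exactly the application of Lemma~\ref{lem:contig} the paper has in mind (the paper itself gives no further detail, merely asserting that the corollary follows from Lemma~\ref{lem:contig} and connectedness). For part (iv) you have put your finger on a genuine subtlety that the paper glosses over: the connectedness argument only forces the spanning subtree to cross from row $a$ to row $a+1$ along some edge, and when that edge is a diagonal $(a,c)\to(a+1,c+1)$ one only gets $i_{a+1}\le j_a+1$. However, the fix you propose to close the remaining off-by-one does not work. The diagonal edge has length~$1$, whereas either two-step detour $(a,c)\to(a,c+1)\to(a+1,c+1)$ or $(a,c)\to(a+1,c)\to(a+1,c+1)$ has length~$2$; these detours are therefore strictly longer and are \emph{not} shortest $s$-to-$(a+1,c+1)$ paths, so the subtree property gives you no information about the two bypass vertices, regardless of tie-breaking.

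In fact the gap appears not to be closable at all: one can cook up a small instance with three sites $s''<s<s'$ on $\bot Q$, additive weights $0,1,2$ (realizable as distances from a vertex $u=s''$), and a single matching diagonal, in which $(a,c)$ and $(a+1,c+1)$ lie in $\Vor(s)$ while $(a,c+1)\in\Vor(s')$ and $(a+1,c)\in\Vor(s'')$, giving $i_{a+1}=j_a+1$. So the clean inequality $i_{a+1}\le j_a$ as stated in the corollary is not literally true for alignment graphs with diagonal edges; what your (correct) connectedness argument actually proves is $i_{a+1}\le j_a+1$, together with the observation that in the equality case the bridging diagonal edge $(a,j_a)\to(a+1,j_a+1)$ must be present in $G$ (it lies on a shortest $s$-to-$(a+1,i_{a+1})$ path inside $\Vor(s)$). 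This weaker conclusion is what Corollaries~\ref{lem:bottomright} and~\ref{cor:last-s-reachable} and the later zooming argument actually rely on, and your proof would be correct if you stopped there rather than attempting to force $i_{a+1}\le j_a$.
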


\begin{corollary}\label{lem:bottomright}
There is a rightmost vertex $\last{s}$ in $\Vor(s)$ that is also a bottommost one.
\end{corollary}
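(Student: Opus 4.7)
The plan is to pick out the vertex achieving the maximum row index and the maximum column index in $\Vor(s)$ simultaneously, by exploiting the monotone double-staircase structure provided by Corollary~\ref{cor:structure}. Since $\Vor(s)$ is nonempty (it contains at least $s$ itself), let $A$ be the largest row index such that $\Vor(s)$ contains a vertex in row $A$. By Corollary~\ref{cor:structure}, the $\Vor(s)$-vertices of row $A$ form a contiguous interval $[i_A, j_A]$ of columns, so the vertex $v^\star = (A, j_A)$ belongs to $\Vor(s)$.

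I would then check that $v^\star$ is simultaneously rightmost and bottommost. It is bottommost by the choice of $A$: no row with index greater than $A$ contains a vertex of $\Vor(s)$. For rightmostness, consider any vertex $(a, c) \in \Vor(s)$; then $a \leq A$, and by the monotonicity $j_a \leq j_A$ from Corollary~\ref{cor:structure}, we have $c \leq j_a \leq j_A$. Hence no vertex of $\Vor(s)$ lies in a column to the right of $j_A$, and $v^\star$ is rightmost. Taking $\last{s} := v^\star$ proves the corollary.

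There is no real obstacle here: the claim is an immediate structural consequence of the monotonicity $i_a \leq i_{a+1} \leq j_a \leq j_{a+1}$ already established, and the only thing one needs to observe is that the nondecreasing sequence $j_a$ attains its maximum precisely at the last row where $\Vor(s)$ has any vertex.
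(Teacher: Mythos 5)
Your proof is correct and takes exactly the route the paper implicitly intends: the paper states this as an immediate corollary of \cref{cor:structure}, and you spell out the short argument that the nondecreasing sequence $j_a$ is maximized at the last occupied row $A$ (using connectivity to chain the monotonicity across consecutive rows), so $(A,j_A)$ is simultaneously bottommost and rightmost.
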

 
 \begin{corollary}\label{cor:last-s-reachable}
 	For every $v \in \Vor(s)$,  $\Vor(s)$ contains a path from $v$ to $\last{s}$. 
 \end{corollary}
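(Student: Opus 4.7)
The plan is to construct an explicit right-and-down monotone path from $v$ to $\last{s}$ that stays inside $\Vor(s)$ throughout, using the staircase structure provided by Corollary~\ref{cor:structure}. Recall from that corollary that in each row $a$, the vertices of $\Vor(s)$ form a contiguous column interval $[i_a, j_a]$, and that these intervals satisfy the monotonicity $i_a \leq i_{a+1} \leq j_a \leq j_{a+1}$. By Corollary~\ref{lem:bottomright}, $\last{s}$ lies in some last row $a^\star$ and has column $j_{a^\star}$, which is the largest column appearing in $\Vor(s)$.

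Write $v = (a, c)$, so that $i_a \leq c \leq j_a$ and $a \leq a^\star$. First I would move rightward along row $a$ from column $c$ to column $j_a$: every intermediate vertex $(a,c')$ with $c \leq c' \leq j_a$ lies in $[i_a,j_a]$, hence in $\Vor(s)$, and the alignment graph provides the required rightward edges. Next, if $a < a^\star$, I would step downward from $(a, j_a)$ to $(a+1, j_a)$; the key check is that this step lands inside $\Vor(s)$, which holds because the monotonicity gives $i_{a+1} \leq j_a \leq j_{a+1}$, so $j_a \in [i_{a+1}, j_{a+1}]$. I would then iterate: from $(a+1, j_a)$, move rightward to $(a+1, j_{a+1})$, then step down to $(a+2, j_{a+1})$, and so on, until reaching row $a^\star$ at column $j_{a^\star-1}$, and finally move right along row $a^\star$ to $(a^\star, j_{a^\star}) = \last{s}$.

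The sole technical point needing verification is that every vertex visited during this construction belongs to $\Vor(s)$, and this reduces entirely to the two inequalities supplied by Corollary~\ref{cor:structure}: horizontal traversals of a row $a'$ stay in $[i_{a'},j_{a'}]$ by construction, and the downward edge from $(a', j_{a'})$ to $(a'+1, j_{a'})$ lands in $[i_{a'+1}, j_{a'+1}]$ because $i_{a'+1} \leq j_{a'} \leq j_{a'+1}$. Thus no real obstacle arises; the only subtlety is to phrase the induction (on $a^\star - a$) cleanly so that the starting column in each successive row is always $\leq j_{a'}$ and $\geq i_{a'}$, which is immediate from the invariants above.
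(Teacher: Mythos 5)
Your proof is correct and matches what the paper evidently intends: Corollary~\ref{cor:last-s-reachable} is stated without an explicit proof precisely because it follows directly from the staircase structure in Corollary~\ref{cor:structure}, and your explicit right-and-down walk along the staircase boundary is the natural argument. You correctly isolate the one non-trivial check — that each downward step from $(a',j_{a'})$ to $(a'+1,j_{a'})$ stays inside $\Vor(s)$ — and derive it from $i_{a'+1}\le j_{a'}\le j_{a'+1}$.

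One small point worth making explicit (you use it implicitly when invoking the inequalities for consecutive rows between $a$ and $a^\star$): every row $a'$ with $a\le a'\le a^\star$ does contain a vertex of $\Vor(s)$, so $i_{a'}$ and $j_{a'}$ are defined. This follows because the shortest $s$-to-$\last{s}$ path lies entirely in $\Vor(s)$ (by the tie-breaking convention making $\Vor(s)$ a subtree of a shortest-path tree) and, being monotone right-and-down, visits every row between the row of $s$ and $a^\star$; the row of $v$ lies in this range since $a^\star$ is the bottommost row of $\Vor(s)$. With that observation your induction is airtight.
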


Our representation of a Voronoi diagram for $\out{P}$ with sites $\bot{P}$ consists of the following. For each $s \in \bot{P}$, we  store:
\begin{enumerate}
\item the rightmost bottommost vertex $\last{s}$ of $\Vor(s)$, defined in~\cref{lem:bottomright},
\item a vertex $\med{s,\last{s}}$ on the shortest $s$-to-$\last{s}$ path, whose  definition will be given later.
\end{enumerate}
\section{The Alignment Oracle}
In this section, we describe our oracle and prove that its space and query time are as in Theorem~\ref{thm:main}. In the next section we will show how to construct the oracle in $N^{1+o(1)}$ time. 

Consider a recursive $(r_{t},\ldots,r_1)$-division of $G$ for some $N=r_{t}> \cdots > r_1=\cO(1)$ to be specified later.
Recall that our convention is that the $r_t$-division consists of $G$ itself.
We also consider an $r_0$-division, in which each vertex $v$ of $G$ is a singleton piece.
Let us denote the set of pieces of the $r_i$-division by $\mathcal{R}_i$. Let $\TG$ denote the tree representing this recursive $(r_t,\ldots,r_0)$-division (each singleton piece $\{v\}$ at level $0$ is attached as the child of a piece $P$ at level $1$ such that $v \in \bot{P}$).

\begin{figure}[h]
\begin{center}
\includegraphics[page=5,scale=0.45]{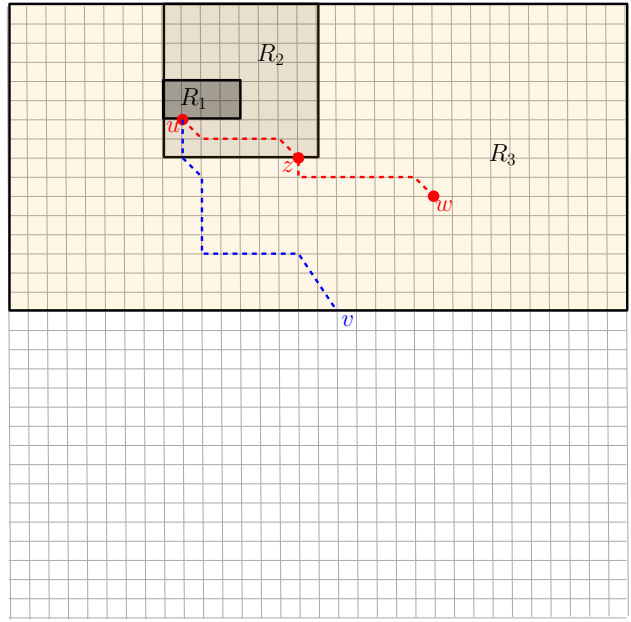}
\caption{A vertex $u$ and all the pieces of a recursive $(r_4,r_3,r_2,r_1,r_0)$-division that contain $u$. The piece $R_0$ consists of just $u$, and the piece $R_4$ is the entire alignment graph. Note that $u \in \bot{R_1}$. In this example, $\lev(u) = 1$, $\anc(u,w) = 3$, and $\anc(u,v) = 4$ (because $v \in \partial R_3$). A $u$-to-$w$ shortest path $\rho$ is shown in dashed red. $\med{u,w}$ is the last vertex of $\rho$ that belongs to $\bot {R_2}$, which is the vertex $z$.
The distance from $u$ to $w$ is $\dist(u,z) + \dist(z,w)$. 
$\dist(u,z)$ is stored in the reverse MSSP of $R_2$. $\dist(z,w)$ is stored in the MSSP of $R_3 \setminus (R_2 \setminus \partial R_2)$. Similarly, a $u$-to-$v$ shortest path $\rho$ is shown in dashed blue. Because $v \in \partial R_3$, $\med{u,v}$ is $v$ itself.
}\label{fig:recursive} 	
\end{center}
\end{figure}

The oracle consists of the following. 
For each $0 \leq i \leq t-1$, for each piece $P \in \mathcal{R}_i$ whose parent in $\TG$ is $Q \in \mathcal{R}_{i+1}$:
\begin{enumerate}
\item If $i>0$, we store an MSSP with sources $\bot{P}$ for the graph obtained from $P$ by flipping the orientation of all edges; we call this the \emph{reverse MSSP of $P$}. 
\item If $i>0$, we store an MSSP with sources $\bot{P}$ for $Q \setminus (P \setminus \partial P)$.
\item If $i<t-1$, for each vertex $u \in \bot{P}$ we store $\VD(u,Q)$: the Voronoi diagram for $\out{Q}$ with sites  $\bot{Q}$ and additive weights the distances in~$G$ from $u$ to these sites.
\end{enumerate}

To complete the description of the oracle it remains to specify the definition of $\med{s,\last{s}}$.
Before doing so, let us distinguish, for a (source) vertex $u$ and a (target) vertex~$v$, 
two levels of the recursive division that are of interest.
Let $R_0$ be the singleton piece $\{u\}$. Let $R_1, R_2, \ldots, R_t$ be the ancestors of $R_0$ in $\TG$.
Note that $u \in \bot{R_i}$ for a non-empty prefix of the sequence of ancestors $R_0, R_1, \dots, R_t$.
Let $\lev(u) = \argmax_i\{u \in \bot{R_i}\}$.
Further, let $\anc(u,v)= \argmin_i \{v \in R_i \setminus \partial R_i\}$. Note that $\anc(u,v)$ is well defined since $v \in R_t = G$ and $\partial G = \emptyset$.
Also note that if $v$ is reachable from $u$ then $\lev(u) < \anc(u,v)$. This is because all vertices in $R_{\lev(u)}\setminus \partial R_{\lev(u)}$ are unreachable from $u$. 

Denote $H=R_{\anc(u,v)-1}$.
We define $\med{u,v}$ as any boundary vertex of $\bot{H}$ that lies on a shortest $u$-to-$v$ path $\rho$.
The idea behind this definition is that $\med{u,v}$ partitions this $u$-to-$v$ path into a prefix and a suffix, each of which is represented in one of the MSSP data structures stored for $H$; The prefix of $\rho$ ending at $\med{u,v}$ is represented in the shortest path tree rooted at $\med{u,v}$ in the reverse MSSP of $H$.
The suffix of $\rho$ starting at $\med{u,v}$ is represented in the shortest path tree rooted at $\med{u,v}$ in the MSSP for $H' \setminus(H\setminus \partial H)$ with sources $\bot{H}$, where $H'=R_{\anc(u,v)}$ is the parent of $H$ in $\TG$. 
This allows us to efficiently compute $\dist(u,v)$ given $\med{u,v}$. See~\cref{fig:recursive}.
This concludes the description of the oracle.

\begin{lemma}\label{lem:space}
The oracle occupies space $\cO \left(N\log^2 N + N\log N \cdot \sum_{i=0}^{t-1} r_{i+1}/r_i\right)$.
\end{lemma}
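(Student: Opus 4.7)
The plan is to bound the space of the three ingredients of the oracle separately and sum them, relying on the standard $\cO(n \log n)$-space bound for an MSSP data structure on a graph with $n$ vertices, the fact that a piece of the $r_i$-division has size $\cO(r_i)$ and $\cO(\sqrt{r_i})$ boundary vertices, and the fact that $|\mathcal{R}_i| = \cO(N/r_i)$.

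First I would handle the two MSSP ingredients. For a piece $P$ at level $i$ with parent $Q$ at level $i+1$, the reverse MSSP of $P$ occupies $\cO(r_i \log r_i)$ space and the MSSP of $Q\setminus(P\setminus \partial P)$ occupies $\cO(r_{i+1}\log r_{i+1})$ space. Summing over the $\cO(N/r_i)$ pieces at each level $i \in [1,t-1]$ gives a contribution of
\[
\sum_{i=1}^{t-1} \Bigl(N \log r_i \;+\; N\,\tfrac{r_{i+1}}{r_i}\log r_{i+1}\Bigr) \;=\; \cO\!\left(N\log N\cdot\!\sum_{i=0}^{t-1} \tfrac{r_{i+1}}{r_i}\right),
\]
where I use $\log r_i \leq \log N$ and the fact that each ratio $r_{i+1}/r_i$ is at least $1$ to absorb the first summand (the "$t$" appearing would then be dominated by $\sum r_{i+1}/r_i$).

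Next I would account for the Voronoi diagrams. For each piece $P \in \mathcal{R}_i$ with parent $Q$, the oracle stores one Voronoi diagram $\VD(u,Q)$ for every $u \in \bot{P}$; each such diagram keeps $\cO(1)$ data per site, and there are $|\bot{Q}| = \cO(\sqrt{r_{i+1}})$ sites, so a single diagram uses $\cO(\sqrt{r_{i+1}})$ space. Multiplying by $|\bot P| = \cO(\sqrt{r_i})$ and by the $\cO(N/r_i)$ pieces at level $i$ gives $\cO(N\sqrt{r_{i+1}/r_i})$ space per level, hence an overall contribution of $\cO(N \sum_{i=0}^{t-1} \sqrt{r_{i+1}/r_i})$, which (since $r_{i+1}/r_i \geq 1$) is dominated by $N\log N\sum r_{i+1}/r_i$.

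Finally I would add the two bounds to get $\cO(N\log N\cdot \sum_{i=0}^{t-1} r_{i+1}/r_i)$, and inflate this to the form stated in the lemma by adding the (subsumed, but convenient) slack term $N\log^2 N$. The only thing that requires any care is the bookkeeping on the index range and the observation $t \leq \sum_i r_{i+1}/r_i$ used to absorb the "$N\log r_i$" summand arising from the reverse MSSPs; everything else is routine substitution into the known space bounds for the underlying data structures.
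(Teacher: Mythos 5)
Your proposal is correct and follows essentially the same approach as the paper: bound the three stored components (reverse MSSPs, forward MSSPs for $Q\setminus(P\setminus\partial P)$, and Voronoi diagrams) separately, using the $\cO(|P|\log|P|)$ MSSP space bound, $|\bot P|=\cO(\sqrt{r_i})$, and $|\mathcal{R}_i|=\cO(N/r_i)$. The only cosmetic difference is bookkeeping for the reverse MSSPs: the paper bounds them directly by $\cO(N\log^2 N)$ (since the sum of piece sizes over the levels is $\cO(N\log N)$), yielding that term explicitly, whereas you bound them level-by-level as $\cO(Nt\log N)$ and absorb this into $N\log N\cdot\sum_i r_{i+1}/r_i$ via $t\le\sum_i r_{i+1}/r_i$, correctly observing that the $N\log^2 N$ term is then redundant slack.
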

\begin{proof}
The reverse MSSPs over all pieces of $\AG$ require $\cO(N \log^2 N)$ space, since $\sum_{P \in \AG} |P|=\cO(N \log N)$ and since the reverse MSSP of a piece $P$ requires space $\cO(|P|\log|P|)$.

For each $i \in (0,t-1]$, for each of the $\cO(N/r_i)$ pieces in $\mathcal{R}_i$, we store an MSSP of size $\cO(r_{i+1} \log r_{i+1})$.
For each $i \in [0,t-1)$, for each of the $\cO(N/r_i)$ pieces in $\mathcal{R}_i$, we store $\cO(\sqrt{r_i})$ Voronoi diagrams each of size $\cO(\sqrt{r_{i+1}})$. The stated bound follows.
\end{proof}

\subparagraph*{Query.}
We now describe how to answer a distance query $\dist(u,v)$. 
First, note that if $u$ and $v$ are in the same piece $P$ in $\mathcal{R}_{1}$, we can report $\dist(u,v)$ in $\cO(1)$ time using brute force.

Let $R_0$ be the singleton piece $\{u\}$. As before, let $R_1, R_2, \ldots , R_t$ be the ancestors of $R_0$ in~$\TG$.
Since the distance query originates from an LCS query, $v$ must be reachable from $u$.
Let $\ell=\lev(u)$ and $h=\anc(u,v)$.
We then have $u \in \bot{R_\ell}$ and $v \in \out{R_\ell}$.
We will answer $\dist(u,v)$ by identifying $\med{u,v}$, which lies on $\bot{ R_{h -1}}$.
As explained above, we can then obtain $\dist(u,v)$ from the MSSP data structures stored for~$R_{h-1}$.
See~\cref{alg:dist}.

\alglanguage{pseudocode}
\begin{algorithm}[h]
\caption{\textsc{Dist}$(u,v)$}\label{alg:dist}
\begin{algorithmic}[1]
\If{ $u$ and $v$ belong to the same piece in $\mathcal R_1$}
	\State \Return the answer by brute force
\EndIf
\State $w \leftarrow$ \textsc{GetLast}$(u,v)$
\State \Return $\dist(u,w) + \dist(w,v)$
\Statex
\end{algorithmic}
  \vspace{-0.4cm}
\end{algorithm}

We now show how to implement the procedure \textsc{GetLast}  for finding $\med{u,v}$; see~\cref{alg:getlast} for a pseudocode.
First, note that if $h=\ell+1$ we can simply return $u$ as $\med{u,v}$.
Hence, in what follows, we assume that $h>\ell+1$.
The procedure \textsc{GetLast} proceeds in iterations for $ i = \ell +1,  \dots, h - 1$. 
At the beginning of the iteration with value $i$, the procedure has a subset $W_{i-1}$ of $\bot{R_{i-1}}$ such that some $w \in W_{i-1}$ belongs to a shortest $u$-to-$v$ path. 
We initially set $W_{\ell} := \{u \}$, which trivially satisfies the requirement for $i=\ell+1$. 
For each $w \in W_{i-1}$ the iteration uses a procedure \textsc{GetNextCandidates} that adds at most two vertices of $\bot{R_{i}}$ to $W_i$. The guarantee is that, if $w$ is a vertex of $\bot{R_{i-1}}$ that belongs to a shortest $u$-to-$v$ path, 
then at least one of the two added vertices also belongs to a shortest $u$-to-$v$ path.
Since $|W_{i}|\leq 2|W_{i-1}|$, at the end of the last iteration (the one for $h - 1$), we have a subset $W_{h - 1}$ of at most $2^t$ vertices of $\bot{R_{h - 1}}$, one of which can be returned as $\med{u,v}$.
To figure out which one, for each such vertex $w$, we use the MSSP data structures to compute $\dist(u,w) + \dist(w,v)$, and return a vertex for which the minimum is attained.

\begin{algorithm}[h]
\caption{\textsc{GetLast}$(u,v)$}\label{alg:getlast}
\begin{algorithmic}[1]
\State $\ell \leftarrow \lev(u)$
\State $h \leftarrow \anc(u,v)$
\State $W_\ell \leftarrow \{u\}$
\For{$  i = \ell+1$ to $h-1$}
	\State $W_i \leftarrow \emptyset$ 
	\For {each $w \in W_{i-1}$}
		\State $W_{i} \leftarrow W_i \ \cup $ \textsc{GetNextCandidates}$(w,i,v)$
	\EndFor
\EndFor
\State \Return $\argmin_{w\in W_{h-1}} \dist(u,w) + \dist(w,v)$
\Statex
\end{algorithmic}
  \vspace{-0.4cm}
\end{algorithm}

It remains to describe the procedure \textsc{GetNextCandidates}. 
Consider any $w \in W_{i-1}$. 
To reduce clutter, let us denote $R_i$ by $Q$. 
Since $w \in \bot{R_{i-1}}$, the Voronoi diagram $\VD(w,Q)$ for $\out Q$ is stored by the oracle. The procedure \textsc{GetNextCandidates} finds two sites of $\VD(w,Q)$, one of which is the site of $v$. 
Indeed, if $w$ is a vertex on a $u$-to-$v$ shortest path then the site of $v$ in $\VD(w,Q)$ is a vertex of $\bot Q$ on a shortest $u$-to-$v$ path.

Let the top-left and bottom-right vertices of $Q$ be $(x_\tp,y_\tp)$ and $(x_\bt,y_\bt)$, respectively.
Every vertex $v=(x_v,y_v)$ of $\out{Q} \setminus \partial Q$ either has $x_v> x_\bt$ or $y_v > y_\bt$.
We describe the case when $x_v>x_\bt$; the case when $y_v > y_\bt$ is analogous.
Let $\Gamma$ denote the set of $s \to \last{s}$ paths $\rho_s$ stored in $\VD(w,Q)$ according to the order of the sites $s$ along $\bot{Q}$.
For every row $x>x_\bt$, let $\Gamma_x$ denote 
the set of paths $\rho_s$ stored in $\VD(w,Q)$ such that $\rho_s$ intersects row $x$, ordered according to the order of the intersection vertices along row $x$.

\begin{lemma}\label{lem:gamma-inclusion}
For every $x, x'$ with $x>x'$, $\Gamma_x$ is a subsequence of $\Gamma_{x'}$.
\end{lemma}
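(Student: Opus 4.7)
The plan is to establish two separate facts whose combination gives the lemma: (i) every path $\rho_s$ that appears in $\Gamma_x$ also appears in $\Gamma_{x'}$, i.e., it intersects row $x'$; and (ii) for any two distinct paths $\rho_s, \rho_{s'} \in \Gamma_x$, their left-to-right order along row $x'$ coincides with their order along row $x$. Together these imply the desired subsequence relationship.

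Fact (i) is immediate from the monotone structure of the alignment graph. Since $s \in \bot{Q}$, its row index is at most $x_\bt$, while both $x$ and $x'$ exceed $x_\bt$ (recall that $\Gamma_r$ is only defined for $r > x_\bt$). Because $\rho_s$ uses only right, down, and down-right edges, the fact that it reaches row $x$ forces it to pass through every row between its starting row and $x$, in particular row $x'$.

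For fact (ii) the crucial input is \cref{cor:last-s-reachable}, which places each path $\rho_s$ inside its Voronoi cell $\Vor(s)$; since distinct Voronoi cells are pairwise disjoint, $\rho_s$ and $\rho_{s'}$ are vertex-disjoint for $s \neq s'$. By monotonicity, the columns of $\rho_s$ in any row $r$ form a contiguous interval $[\ell_s(r), r_s(r)]$, with both endpoints non-decreasing in $r$ and satisfying $\ell_s(r+1) \in \{r_s(r), r_s(r)+1\}$. Assume without loss of generality that $\rho_s$ lies strictly to the left of $\rho_{s'}$ in row $x$, i.e.\ $r_s(x) < \ell_{s'}(x)$. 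The goal is to show $r_s(x') < \ell_{s'}(x')$ by induction on $x - x'$.

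The substantive step is the one-step case. The chain $r_s(x-1) \leq r_s(x) < \ell_{s'}(x) \leq r_{s'}(x-1) + 1$ yields $r_s(x-1) \leq r_{s'}(x-1)$; if we additionally had $r_s(x-1) \geq \ell_{s'}(x-1)$, then the vertex $(x-1, r_s(x-1))$ would sit inside the column intervals of both paths in row $x-1$, contradicting vertex-disjointness. Hence $r_s(x-1) < \ell_{s'}(x-1)$ and the induction closes. The main obstacle I anticipate is just this bookkeeping around how the column intervals of neighbouring rows interact; once the one-step case is settled, the rest of the argument is a direct consequence of monotonicity of $\rho_s, \rho_{s'}$ together with the disjointness of their Voronoi cells.
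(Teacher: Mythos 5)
Your proof is correct and fills in, with careful detail, exactly the two ingredients that the paper's one-line proof invokes: monotonicity of each $\rho_s$ (your fact (i), that a path reaching row $x$ must visit every intermediate row) and the pairwise vertex-disjointness of the $\rho_s$ (your fact (ii), that disjoint monotone staircases cannot swap left-to-right order between consecutive rows). One small attribution nit: the containment $\rho_s \subseteq \Vor(s)$ that gives disjointness is most directly the observation stated just before Lemma~\ref{lem:contig} (Voronoi cells are spanned by subtrees of a shortest-path tree), rather than Corollary~\ref{cor:last-s-reachable} per se, though both rest on that same fact.
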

\begin{proof}
This is a direct consequence of the fact that each path $\rho_s$ goes monotonically down and right, and from the fact that $\rho_s$ and $\rho_{s'}$ are disjoint for $s \neq s'$.
\end{proof}

We define the set of {\em critical rows} to be all rows $x$ such that $(x,y) = \last{s}$ for some $y\in [0,n]$ and $s \in \bot{Q}$.
Lemma~\ref{lem:gamma-inclusion} implies that if we consider the evolution of the  sequences $\Gamma_x$ as $x$ increases from $x_\bt$ to $m$ as a dynamic process, changes occur only at critical rows. More precisely, if the row of $\last{s}$ is $x$ for some site $s$, then $\rho_s \in \Gamma_x$ but $\rho_s \notin \Gamma_{x+1}$.
We can therefore maintain the sequences $\Gamma_x$ 
in a persistent binary search tree.
(A binary search tree can be made partially persistent at no extra asymptotic cost in the update and search times using a general technique for pointer-machine data structures of bounded degree~\cite{DBLP:journals/njc/Brodal96}.)
Initially, the BST stores the sequence $\Gamma_{x_\bt +1}$. Then, we go over the critical rows in increasing order, and remove the path $\rho_s$ from the BST when we reach the row of $\last{s}$.

For any $x_\bt < x \leq n$, we can access the BST representation of $\Gamma_x$ by finding the predecessor $x'$ of $x$ among the critical rows, and accessing the persistent BST at time $x'$.

Let $v = (x,y)$. 
We say that $v$ is right (left) of a path $\rho_s \in \Gamma_x$ if $x$ is greater (smaller) than any vertex of $\rho_s$ at row $x$.
We will either find a path $\rho_s \in \Gamma_x$ to which $v$ belongs, 
or identify the last path $\rho_s \in \Gamma_x$ such that $v$ is right of $\rho_s$. 
In the former case the site of $v$ is $s$,  and in the latter case the site of $v$ is either $s$ or the successor of $s$ in $\Gamma_x$.

Recall that (1) the MSSP data structure, given a root vertex $r$ and two vertices $w,z$, can determine in $\cO(\log N)$ time whether $w$ is left/right/ancestor/descendant of $z$ in the shortest path tree rooted at $r$, (2) for each shortest path $\rho_s$ represented in $\VD(w,Q)$, the representation contains $\med{s,\last{s}}$, and   
(3) the prefix of $\rho_s$ ending at $\med{s,\last{s}}$ is represented in the shortest path tree rooted at $\med{s,\last{s}}$ in the reverse MSSP of $H$.
Similarly, the suffix of $\rho_s$ starting at $\med{s,\last{s}}$ is represented in the shortest path tree rooted at $\med{s,\last{s}}$ in the MSSP for $H' \setminus(H\setminus \partial H)$, where $H'$ is the parent of $H$.
 
We perform binary search on $\Gamma_x$ to identify the path $\rho_s$ such that either $v\in \rho_s$ or $\rho_s$ is the last path of $\Gamma_x$ that is left of $v$. 
Focus on a step of the binary search that considers a path~$\rho_s$. 
Denote $\med{s,\last{s}}=(x_b,y_b)$.
If $y<y_b$, we query the MSSP structure that contains the prefix of $\rho_s$, and otherwise we query the MSSP data structure that contains the suffix of $\rho_s$. 
In either case, the query either returns that $v$ is on $\rho_s$ or tells us whether $v$ is left or right of $\rho_s$.
In the former case we conclude that the site of $v$ is $s$. In the latter case we continue the binary search accordingly. 
Each step of the binary search takes $\cO(\log n)$ time.
Note that $\log n = \cO(\log N)$.
Thus, the binary search takes $\cO(\log^2 N)$ time, and when it terminates we have a site $s$ that is either the site of $v$ or the site such that $\rho_s$ is the last path of $\Gamma_x$ that is left of $v$.
This implies that the site of $v$ is either $s$ or the successor of $s$ in $\Gamma_x$, and concludes the description of \textsc{GetNextCandidates}.

\begin{lemma}\label{lem:query}
The oracle answers distance queries in time $\cO(2^t \log^2 N)$.
\end{lemma}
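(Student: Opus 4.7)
The plan is to argue that the running time is dominated by (i) the total work inside \textsc{GetLast}, specifically the calls to \textsc{GetNextCandidates}, and (ii) the final MSSP-based computations of $\dist(u,w)+\dist(w,v)$ for each surviving candidate $w\in W_{h-1}$. The base case when $u$ and $v$ lie in the same piece of $\mathcal R_1$ costs $\cO(1)$, and the case $h=\ell+1$ is handled trivially by returning $u$, so I will focus on the main loop of \textsc{GetLast}.

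First I would bound the cardinality of $W_i$. Since $W_\ell=\{u\}$ and each $w\in W_{i-1}$ contributes at most two vertices to $W_i$, an immediate induction gives $|W_i|\le 2^{i-\ell}$. Because the loop runs for $i=\ell+1,\ldots,h-1$ with $h-\ell\le t$, we obtain $|W_{h-1}|\le 2^{t-1}$. The total number of invocations of \textsc{GetNextCandidates} throughout the run of \textsc{GetLast} is $\sum_{i=\ell+1}^{h-1}|W_{i-1}|\le \sum_{j=0}^{t-2}2^j = \cO(2^t)$.

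Next I would bound the cost of a single invocation of \textsc{GetNextCandidates}$(w,i,v)$. Given $v=(x_v,y_v)$, we first need to access the persistent BST representing $\Gamma_{x_v}$; this is done by locating the predecessor of $x_v$ among the critical rows, which takes $\cO(\log N)$ time by binary search over the stored critical rows (and $\cO(1)$ extra time to access the persistent version). Then we perform a binary search on $\Gamma_{x_v}$; each comparison queries one of the MSSP data structures of $R_{i-1}$ or of $R_i\setminus(R_{i-1}\setminus\partial R_{i-1})$ to decide whether $v$ is on, left of, or right of the tested path $\rho_s$, at a cost of $\cO(\log N)$ per step as recalled in the preliminaries. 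Since $|\Gamma_{x_v}|\le |\bot{R_i}|=\cO(\sqrt{r_i})=\cO(\sqrt N)$, the binary search uses $\cO(\log N)$ steps, giving $\cO(\log^2 N)$ per call.

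Combining the two bounds, the entire loop of \textsc{GetLast} costs $\cO(2^t\log^2 N)$. The final step of \textsc{GetLast} evaluates $\dist(u,w)+\dist(w,v)$ for every $w\in W_{h-1}$; by the construction of the oracle, each such value is computed from two MSSP queries in the reverse MSSP of $R_{h-1}$ and the MSSP of $R_h\setminus(R_{h-1}\setminus\partial R_{h-1})$ (with the appropriate source $w$, which lies in $\bot{R_{h-1}}$), each costing $\cO(\log N)$, so this step contributes $\cO(2^t\log N)$. Adding the final $\dist(u,w)+\dist(w,v)$ computation in \textsc{Dist} itself, also $\cO(\log N)$, the overall query time is $\cO(2^t\log^2 N)$, which is the stated bound. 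The only subtle point I anticipate is verifying that \textsc{GetNextCandidates} has access to the MSSP roots it needs; this is immediate because $w\in \bot{R_{i-1}}$ and $\med{s,\last s}\in \bot{R_{i-1}}$ are precisely the source sets of the MSSPs stored for the parent piece $R_i$, and the representation of $\VD(w,R_i)$ supplies $\med{s,\last s}$ for each site $s$.
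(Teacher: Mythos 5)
Your proof is correct and follows essentially the same argument as the paper: $\cO(2^t)$ calls to \textsc{GetNextCandidates}, each $\cO(\log^2 N)$ via the inner binary search with $\cO(\log N)$-time MSSP comparisons, plus a final $\cO(2^t)$ MSSP queries. Two minor slips that do not affect the bound: you omit the $\cO(\log N)$-time computation of $\lev(u)$ and $\anc(u,v)$ by walking the ancestors of $u$ in $\TG$, and you state that a comparison against $\rho_s$ queries the MSSPs associated with $R_{i-1}$ and $R_i\setminus(R_{i-1}\setminus\partial R_{i-1})$, whereas it actually queries the reverse MSSP of $H$ and the MSSP of $H'\setminus(H\setminus\partial H)$ for $H=R_{\anc(s,\last s)-1}$, a piece determined by the site $s$ and $\last s$ rather than by $u$.
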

\begin{proof}
First, $\lev(u)$ and $\anc(u,v)$ can be (naively) computed in $\cO(\log N)$ time by going over the ancestors of ${u}$ in $\mathcal{T}$: the coordinates of the corners of each (rectangular) ancestor piece of $u$ can be retrieved in $\cO(1)$ time per ancestor, and for each of them we can check in $\cO(1)$ time whether $v \in R\setminus \partial R$, using $v$'s coordinates.
Overall, for a $\dist(u,v)$ query, we make $\cO(2^t)$ calls to \textsc{GetNextCandidates}, each requiring $\cO(\log^2 N)$ time, for a total of $\cO(2^t \log^2 n)$ time.
Finally, we make $\cO(2^t)$ queries to the MSSP data structures, requiring $\cO(2^t \log N)$ time in total.
\end{proof}

\begin{remark}
Since our query procedure computes $\med{u,v}$, and we have MSSP data structures that capture the $u$-to-$\med{u,v}$ and the $\med{u,v}$-to-$v$ shortest paths, an optimal alignment can be returned in time proportional to the total length of the two substrings.
\end{remark}

By setting the $r_i$'s appropriately, we obtain the following tradeoffs, which are identical to those of Pettie and Long for arbitrary planar graphs~\cite{DBLP:journals/corr/abs-2007-08585}.

\begin{proposition}\label{prop}
For two strings of lengths $m$ and $n$, with $N=mn$, there is an alignment oracle achieving either of the following tradeoffs:
\begin{itemize}
\item $N\log^{2+o(1)} N$ space and $N^{o(1)}$ query time,
\item $N^{1+o(1)}$ space and $\log^{2+o(1)} N$ query time.
\end{itemize}
\end{proposition}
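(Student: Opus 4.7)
The plan is to combine the space bound of~\cref{lem:space} with the query-time bound of~\cref{lem:query} and then tune the parameters $t$ and $r_1,\dots,r_t$ (with the convention $r_0 = 1$ for singleton pieces and $r_t = N$ for the whole graph) to realise each of the two tradeoffs.

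First I would fix the shape of the sequence. The space depends on the sum $\sum_{i=0}^{t-1} r_{i+1}/r_i$ subject to the telescoping constraint $\prod_{i} r_{i+1}/r_i = \Theta(N)$, so by AM--GM this sum is minimised when the $r_i$'s form a geometric progression. I would therefore take $r_i = \Theta(N^{i/t})$, rounded so that each $r_i$-division can actually be materialised inside the dyadic decomposition $\AG$. With this choice $\sum_{i=0}^{t-1} r_{i+1}/r_i = \cO(t\cdot N^{1/t})$, so the oracle uses $\cO(N\log^2 N + N\log N \cdot t\cdot N^{1/t})$ space by~\cref{lem:space} and answers a query in $\cO(2^t \log^2 N)$ time by~\cref{lem:query}.

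For the first tradeoff (space $N\log^{2+o(1)} N$, query $N^{o(1)}$) I would set $t = \lceil \log N / \sqrt{\log\log N}\rceil$. Then $N^{1/t} = 2^{\sqrt{\log\log N}} = \log^{o(1)} N$, so $t\cdot N^{1/t} = \log^{1+o(1)} N$ and the overall space is $N\log^{2+o(1)}N$; at the same time $2^t = 2^{\log N/\sqrt{\log\log N}} = N^{o(1)}$, giving query time $N^{o(1)}$. For the second tradeoff (space $N^{1+o(1)}$, query $\log^{2+o(1)} N$) I would instead set $t = \lceil \sqrt{\log\log N}\rceil$. Now $2^t = 2^{\sqrt{\log\log N}} = \log^{o(1)} N$, so $2^t\log^2 N = \log^{2+o(1)} N$; and since $t = o(\log\log N)$ we only get $N^{1/t} = N^{o(1)}$, so the dominant term $N\log N \cdot t\cdot N^{1/t}$ is $N^{1+o(1)}$, as required.

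The argument is essentially just parameter optimisation, so the only mild subtlety is ensuring the $r_i$'s are realisable inside $\AG$: each $r_i$ has to be (up to a constant factor) the product of a power-of-two width and a power-of-two height, and the hierarchy $r_1 \mid r_2 \mid \cdots \mid r_t$ must nest correctly in the dyadic tree. Rounding each target $N^{i/t}$ to the nearest such admissible piece size perturbs $\log N^{1/t}$ by at most $\cO(1)$, which is safely absorbed into the $o(1)$ slack appearing in both the space and query-time exponents, so both tradeoffs go through without any change in the asymptotics.
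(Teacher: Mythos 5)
Your proposal is correct and takes essentially the same approach as the paper: both pick the $r_i$'s to be a geometric progression and then tune its common ratio; you just instantiate the paper's generic functions $g(N)\in\omega(\log N/\log\log N)\cap o(\log N)$ and $f(N)\in\omega(1)\cap o(\log\log N)$ with the concrete choices $t=\log N/\sqrt{\log\log N}$ and $t=\sqrt{\log\log N}$, which lie in those families, and both choices give the claimed bounds.
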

\begin{proof}
The space of the oracle is $\cO \left(N\log^2 N + N\log N \cdot \sum_{i=0}^{t-1} r_{i+1}/r_i\right)$ by~\cref{lem:space}.
We will choose $r_i$'s for $i\geq 1$ to be a geometric progression with common ratio $p$ to be specified below.
In that case, $t=\cO(\log_p N)$ and the space becomes $\cO(N\log^2 N + N\log N \cdot p \log_p N)$.
First, let us set $p=N^{1/g(N)}$ for some $g(N)$ which is $\omega(\log N/ \log\log N)$ and $o(\log N)$. Then, $p=2^{\log N/ g(N)}=2^{o(\log\log N)}=\log^{o(1)} N$.
We get $\cO(N \log N\cdot 2^{\log N/g(N)} \log N)=N\log^{2+o(1)} N$ space and $N^{o(1)}$ query time.
Second, let us set $p=N^{1/f(N)}$, for some $f(N)$ which is $\omega(1)$ and $o(\log\log N)$. 
We get $N^{1+o(1)}$ space and $\log^{2+o(1)} N$ query time.
\end{proof}

The following observation will prove useful in the efficient construction algorithm of the oracle that will be presented in the next section.

\begin{observation}\label{obs:limited}
The query algorithm for $\dist(u,v)$  takes $\cO(2^{t-\lev(u)}\log^2 n)$ time and uses only Voronoi diagrams $\VD(u,Q)$ for $Q \in \mathcal{R}_i$ with $i > \lev(u)$. 
\end{observation}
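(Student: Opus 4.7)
Set $\ell = \lev(u)$ and $h = \anc(u,v)$, so that the outer loop of \textsc{GetLast} iterates over $i = \ell+1, \dots, h-1$, and $h \leq t$. Both halves of the observation follow by direct inspection of \textsc{GetLast} and \textsc{GetNextCandidates}; I would split the argument into a running-time bound and a trace of which Voronoi diagrams are consulted.

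For the time bound, I would first prove by induction on $i$ that $|W_i| \leq 2^{i - \ell}$. The base case $|W_\ell| = |\{u\}| = 1$ is immediate, and the inductive step uses the fact that each invocation of \textsc{GetNextCandidates} contributes at most two sites of $\bot{R_i}$ per vertex of $W_{i-1}$ (the site returned by the binary search on $\Gamma_x$, and its successor in $\Gamma_x$). The total number of \textsc{GetNextCandidates} calls is then bounded by $\sum_{i=\ell+1}^{h-1} |W_{i-1}| \leq \sum_{i=\ell+1}^{h-1} 2^{i-1-\ell} \leq 2^{h-\ell}$. Each call performs a binary search of depth $\cO(\log n)$ over the persistent BST representation of $\Gamma_x$, with each comparison resolved by an MSSP query in $\cO(\log n)$ time, for a cost of $\cO(\log^2 n)$ per call. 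Multiplying yields $\cO(2^{h-\ell}\log^2 n)$ for the loop. The concluding $\argmin$ over $W_{h-1}$ computes $\dist(u,w)+\dist(w,v)$ for each candidate via a constant number of MSSP queries, contributing $\cO(2^{h-\ell}\log n)$, and the preliminary computation of $\lev(u)$ and $\anc(u,v)$ costs $\cO(\log N)$ as in the proof of \cref{lem:query}; both are absorbed. Since $h \leq t$, the overall cost is $\cO(2^{t-\lev(u)}\log^2 n)$, as claimed.

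For the Voronoi-diagram claim I would simply trace which diagrams the algorithm reads. The only point in the query where a Voronoi diagram is accessed is inside a call to \textsc{GetNextCandidates}$(w, i, v)$, which consults $\VD(w, R_i)$ for $R_i \in \mathcal{R}_i$ and $w \in W_{i-1} \subseteq \bot{R_{i-1}}$ (so the data structure is indeed one stored by the oracle). Because the loop index $i$ ranges over $\{\ell+1, \dots, h-1\}$, every piece $R_i$ whose Voronoi diagram is read lies at a level $i > \ell = \lev(u)$. No other Voronoi diagram is touched: the remaining work of the query uses only the MSSP data structures stored at $R_{h-1}$.

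I do not foresee any substantive obstacle; the argument is a direct inspection of the pseudocode plus a geometric sum. The only care required is (i) in the level indexing, so that the first iteration is correctly identified as taking place at level $\ell + 1 > \lev(u)$ rather than $\ell$, and (ii) in observing that the geometric sum collapses to its last term, so that the final bound is $2^{t-\lev(u)}$ and not $2^{t-\lev(u)}(t - \lev(u))$.
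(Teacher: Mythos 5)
Your proof is correct and takes essentially the same approach as the paper: the observation has no standalone proof in the paper because it is a direct refinement of the analysis in the proof of Lemma~\ref{lem:query}, once one notes that the outer loop of \textsc{GetLast} runs only from $i=\lev(u)+1$ to $\anc(u,v)-1 \leq t-1$, so the geometric growth of $|W_i|$ starts at $|W_{\lev(u)}|=1$ and the only Voronoi diagrams touched are $\VD(w,R_i)$ at levels $i>\lev(u)$. One minor imprecision worth fixing: your final sentence claims the remaining work reads only MSSP structures ``stored at $R_{h-1}$,'' but each binary-search step inside \textsc{GetNextCandidates} also issues MSSP queries against the structures associated with the pieces $H$ containing the relevant $\last{s}$; this does not affect the Voronoi-diagram claim nor the $\cO(\log^2 n)$ cost per call, which you already charged for.
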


\section{An Efficient Construction Algorithm}
In this section, we present an algorithm  for constructing the alignment oracle in $N^{1+o(1)}$ time (thus completing the proof of Theorem~\ref{thm:main}). 
The computation of the recursive decomposition, the recursive $(r_t, \ldots, r_0)$-division and all of the MSSP structures stored for all pieces in $\AG$ can be done in $\cO(N\log^2 N)$ time. It therefore only remains to analyze the time it takes to construct all the representations of Voronoi diagrams stored by the oracle. 

Consider some additively weighted Voronoi diagram for $\out{Q}$ with sites a subsequence $U$ of $\bot{Q}$---we will only build Voronoi diagrams with $U=\bot{Q}$, but during the analysis, we will also consider Voronoi diagrams with sites $U \subseteq \bot{Q}$.
In what follows, when we talk about a piece $H \neq Q$, we will really mean its intersection with $\out{Q}$, assuming that it is non-empty.
Similarly, when we talk about $\partial{H}$, $\top{H}$, and $\bot{H}$ we will really mean the intersection of $\partial{H}$, $\top{H}$, and $\bot{H}$ with $\out{Q}$, respectively. 
See~\cref{fig:cover} for an illustration.

\begin{figure}[h]
\begin{center}
\includegraphics[page=6,scale=0.45]{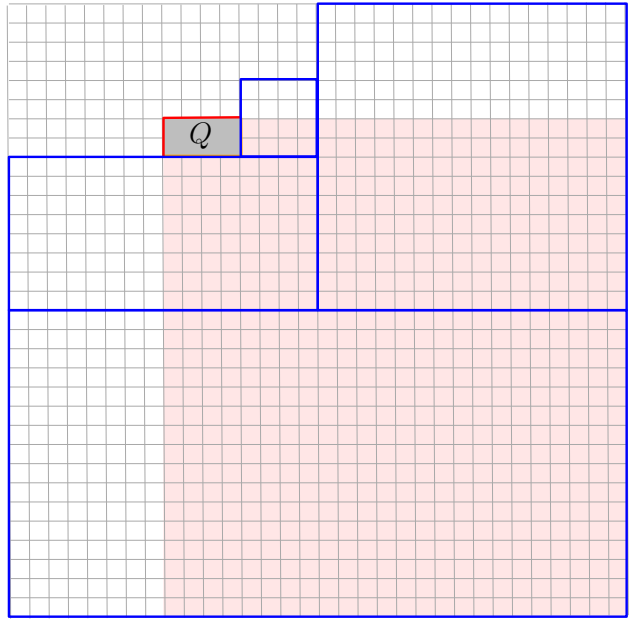}
\caption{Covering $\out Q$ (shaded pink) with siblings of ancestors of $Q$ in $\AG$ (blue boxes). When we refer to a piece $H$ (e.g. a blue box), we only refer to the portion of $H$ that belongs to $\out Q$.
}\label{fig:cover} 	
\end{center}
\end{figure}

\begin{lemma}\label{lem:cormon}
Let $X \in \{\top{H},\bot{H}\}$.
Let $u,v\in X$ belong to distinct Voronoi cells.
If $u$ precedes $v$ (in $X$) then the site $s_u \in U$ of $u$ precedes (in $U$) the site $s_v \in U$ of $v$.
\end{lemma}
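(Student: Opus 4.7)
I would argue the contrapositive: assume $u,v\in X$ lie in distinct Voronoi cells (so $u\ne v$ and $s_u\ne s_v$) and that \emph{$s_v$ precedes $s_u$ in $U$}; I will then deduce that $v$ precedes $u$ in $X$, contradicting the hypothesis since the order on $X$ is total.

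The first step is to translate the orderings on $\bot Q$ and on $X$ into coordinate inequalities. Both sequences are, by convention, enumerated starting at the bottom-left corner of the relevant piece and proceeding to its top-right along two adjacent sides, so a short case split---based on which of the two sides each of $s_u,s_v$ (respectively each of $u,v$) sits on---yields
\[
s_v \text{ precedes } s_u \text{ in } U \;\Longrightarrow\; x_{s_v}\ge x_{s_u} \text{ and } y_{s_v}\le y_{s_u},
\]
\[
u \text{ precedes } v \text{ in } X \;\Longrightarrow\; x_u\ge x_v \text{ and } y_u\le y_v,
\]
where $(x_\cdot,y_\cdot)$ denote the row and column of a vertex.

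The second step exhibits two vertex-disjoint monotone paths whose endpoints are configured so that the paths must cross. Because, as noted after the definition of $\VD$, each Voronoi cell is spanned by a subtree of the shortest-paths tree rooted at its site, there exist a shortest $s_u$-to-$u$ path $\pi_u \subseteq \Vor(s_u)$ and a shortest $s_v$-to-$v$ path $\pi_v \subseteq \Vor(s_v)$; since the cells are disjoint, $\pi_u$ and $\pi_v$ share no vertex. Since every edge of the alignment graph goes right and/or down, both paths are monotone, $\pi_u$ lies in $[x_{s_u},x_u]\times[y_{s_u},y_u]$, and $\pi_v$ in $[x_{s_v},x_v]\times[y_{s_v},y_v]$. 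The inequalities above give $[x_{s_v},x_v]\subseteq[x_{s_u},x_u]$, so $\pi_u$ passes through every row of $\pi_v$. At row $x_{s_v}$ the path $\pi_v$ contains column $y_{s_v}\le y_{s_u}$ while $\pi_u$ sits in columns $\ge y_{s_u}$, putting $\pi_v$ to the left of $\pi_u$ there; symmetrically, at row $x_v$ the path $\pi_u$ lies in columns $\le y_u\le y_v$ while $\pi_v$ contains column $y_v$, putting $\pi_u$ to the left of $\pi_v$. A short inductive argument---the column intervals of two vertex-disjoint monotone right/down paths cannot swap their left-right order between successive common rows without overlapping, and hence without sharing a vertex---then yields the required contradiction.

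\emph{The main obstacle} is to carefully discharge the degenerate configurations in which some of the coordinate inequalities above collapse to equalities: for example when $y_{s_u}=y_{s_v}$ because both sites lie on the right column of $Q$, when $u$ and $v$ share a row or a column of $H$, when one of $s_u,s_v$ sits at the bottom-right corner of $Q$, or when one of the paths $\pi_u,\pi_v$ degenerates to a single row. In each of these cases one has to verify, using $s_u\ne s_v$ and $u\ne v$, that the weak ``left of'' comparisons in Step~2 can be sharpened to strict ones---typically by showing that $\pi_u$ must leave the shared column or row before $\pi_v$ enters it---so that the putative order swap actually forces a shared vertex. This amounts to a finite case analysis parallel to the one in Step~1.
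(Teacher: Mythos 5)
Your proof is correct and follows essentially the same approach as the paper: argue by contradiction that the shortest $s_u$-to-$u$ and $s_v$-to-$v$ paths, each contained in its own Voronoi cell, must cross at a common vertex, contradicting disjointness of the cells. You have merely unpacked the paper's appeal to ``planarity and monotonicity'' into explicit coordinate inequalities and a left-right order-swap argument, which is a more detailed but not genuinely different route.
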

\begin{proof}
For any vertex $a \in X$, that belongs to the cell of a site $s_a$, all 
vertices in the shortest $s_a$-to-$a$ path belong in $\Vor(s_a)$.
Towards a contradiction suppose that $s_u$ succeeds $s_v$ in $U$. 
By the planarity of the graph and the fact that paths can go only down and right it follows that the shortest $s_u$-to-$u$
path must cross the shortest $s_v$-to-$v$ path in some vertex $b$.
Then $b \in \Vor(s_u) \cap \Vor(s_v)$, which is a contradiction as Voronoi cells are disjoint.
\end{proof}

The above lemma means that the vertices of each of $\top{H}$ and $\bot{H}$ can be partitioned into maximal contiguous intervals of vertices belonging to the Voronoi cell of the same site in $U$.
When we say that we compute the partition of $\top{H}$ or $\bot{H}$ with respect to $U$, we mean that we compute the subsequence of sites in $U$ of these intervals by identifying the endpoints of each interval and the site of each interval.

The following simple observation allows us to compute partitions using binary search.
It says that a piece $H$ contains $\last{s}$ if and only if 
$s$ is the site of some vertex in $\top{H}$, and $s$ is not the
site of any vertex in $\bot{H}$.

\begin{lemma}\label{lem:zoom}
For any $s\in S$ and any level $\ell$, there is a unique level-$\ell$ piece $H\in \AG$ for which $\Vor(s) \bigcap \top{H}\neq \emptyset$ and $\Vor(s) \bigcap \bot{H}= \emptyset$, and this piece contains $\last{s}$.
\end{lemma}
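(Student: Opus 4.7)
The plan is to show that the unique piece satisfying both conditions is the level-$\ell$ piece $H^*\in\AG$ that contains $\last{s}$ in its ``top-left half'' -- formally, the unique piece for which $\last{s}\in H^*\setminus\bot{H^*}$. Existence and uniqueness of such a piece follow from a short case analysis on how many level-$\ell$ pieces share $\last{s}$: if $\last{s}$ is strictly interior to a piece, that piece is $H^*$; if $\last{s}$ lies on a shared row or column between two pieces, or at a four-way meeting point, exactly one of the candidate pieces (the one furthest to the bottom-right, where $\last{s}$ sits on $\top{H^*}$ and not on $\bot{H^*}$) is $H^*$.

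To verify condition~(b) for $H^*$, I would combine \cref{lem:bottomright} with the staircase description of $\Vor(s)$ from \cref{cor:structure}: any vertex that is strictly to the right of the column of $\last{s}$, or strictly below its row, fails to lie in $\Vor(s)$. Since $\last{s}\notin\bot{H^*}$, every vertex of $\bot{H^*}$ is either strictly to the right of, or strictly below, $\last{s}$ in at least one coordinate, so $\Vor(s)\cap\bot{H^*}=\emptyset$.

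To verify condition~(a), I would invoke \cref{cor:last-s-reachable} with $v=s$ to obtain an $s$-to-$\last{s}$ path $\pi$ entirely inside $\Vor(s)$. Because paths in the alignment graph move only right and down, and because $\last{s}$ lies in $H^*$ while $s$ lies outside $H^*$ (or, in a degenerate case, $s\in\top{H^*}$ already), the monotone path $\pi$ must cross $\top{H^*}$, witnessing $\Vor(s)\cap\top{H^*}\neq\emptyset$.

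For uniqueness, let $H\neq H^*$ be another level-$\ell$ piece satisfying~(a) and pick any $v\in\Vor(s)\cap\top{H}$. By \cref{cor:last-s-reachable} there is a $v$-to-$\last{s}$ path $\pi'$ contained in $\Vor(s)$. Since $\last{s}$ either lies outside $H$ or lies in $\bot{H}$ by the choice of $H^*$, the monotone path $\pi'$ must contain a vertex of $\bot{H}$, contradicting~(b); this also gives ``$H^*$ contains $\last{s}$'' as required. The main obstacle I anticipate is the corner-case analysis used to define $H^*$ when $\last{s}$ is shared between several pieces, and in particular when it lies on the outer boundary of $\out{Q}$: one must check each configuration to confirm that $H^*$ is uniquely determined and that the vertices of $\bot{H^*}\cap\out{Q}$ really do lie strictly below or to the right of $\last{s}$, so that the staircase argument for~(b) goes through.
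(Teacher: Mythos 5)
Your proof is correct and follows essentially the same route as the paper's: you use \cref{cor:last-s-reachable} to argue that any piece $H$ meeting $\Vor(s)$ on $\top{H}$ but with $\last{s}\notin H\setminus\bot{H}$ also meets $\Vor(s)$ on $\bot{H}$, and you identify $H^*$ as the bottom-right piece among the (up to four) level-$\ell$ pieces containing $\last{s}$. The paper's proof is terser (it leaves the verification that the bottom-right piece satisfies both conditions as ``readily verified''), whereas you spell out parts (a) and (b) explicitly via the staircase structure and the monotone-path crossing argument; this is merely more detail, not a different argument.
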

\begin{proof}
By~\cref{cor:last-s-reachable}, for all $v \in \Vor(s)$, there exists a $v$-to-$\last{s}$ path all of whose vertices are in $\Vor(s)$.
Hence, for every level-$\ell$ piece $H$ for which $\Vor(s) \bigcap \top{H}\neq \emptyset$ and $\last{s} \not\in H$, we must also have that $\Vor(s) \bigcap \bot{H} \neq \emptyset$.
We can thus focus on the at most four level-$\ell$ pieces that contain $\last{s}$.
It is readily verified that the bottom-right of those pieces is the only one for which $\Vor(s) \bigcap \bot{H}= \emptyset$.
\end{proof}

\begin{remark}
The condition in the statement of Lemma~\ref{lem:zoom} is equivalent to: $\last{s} \in H \setminus \bot{H}$.
\end{remark}

Lemma~\ref{lem:zoom} provides a criterion on the partitions of $\top H$ and $\bot H$ for determining whether a piece $H$ contains any vertex $\last{s}$. 
The following lemma describes a binary search procedure, 
 {\sc Partition}, which gets as input a sequence $U$ of candidate sites, and returns the partition of 
$\top{H}$ or $\bot{H}$; i.e., the subsequence of sites in $U$ whose Voronoi cells contain the vertices of $\top{H}$ or $\bot{H}$. The procedure {\sc Partition} will be a key element in the overall construction algorithm. 
The following lemma describes an implementation of {\sc Partition} using distance queries $\dist(u, v)$ with $u \in \bot{Q}$ and $v \in \partial H$. We will ensure that such queries can be answered efficiently whenever {\sc Partition} is called by the main algorithm.

\begin{lemma}\label{lem:partition}
Given a sequence $U \subseteq \bot{Q}$ of sites and their additive weights, we can perform the procedure \textsc{Partition} (that computes a partition of $\top{H}$ or $\bot{H}$ w.r.t.~$U$) in the time required by $\cO(|U|\cdot \log n)$ distance queries $\dist(u, v)$ with $u \in \bot{Q}$ and $v \in \partial H$.
\end{lemma}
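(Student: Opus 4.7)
\textbf{My plan} is to exploit the monotonicity of the site-assignment along $X$ given by \cref{lem:cormon} via a balanced divide-and-conquer. I define a recursive procedure $\textsc{Partition}(X', U')$ operating on a contiguous subrange $X' \subseteq X$ and a contiguous subsequence $U' \subseteq U$, maintaining the invariant that the site of every vertex of $X'$ lies in $U'$. If $|U'| = 1$, \cref{lem:cormon} forces every vertex of $X'$ to belong to that single site, so I emit one interval and return without issuing any queries. Otherwise, I pick the middle vertex $x_m$ of $X'$, issue the $|U'|$ distance queries $\dist(s, x_m)$ for $s \in U'$, and set $s_{j^\ast} = \argmin_{s \in U'} (\weight(s) + \dist(s, x_m))$ using the fixed tie-breaking rule of the Voronoi diagram. \cref{lem:cormon} guarantees that the site of every vertex in the left half of $X'$ lies in $U'[1 \dd j^\ast]$ and every vertex in the right half has its site in $U'[j^\ast \dd |U'|]$; I recurse on the two halves and stitch the outputs by merging adjacent intervals that share $s_{j^\ast}$. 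Every query issued has the required form, since $U \subseteq \bot{Q}$ and $X \subseteq \partial H$.

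The main hurdle is the query-count analysis, because each split introduces a shared site and a naive accounting gives a spurious additive $|X|$ term. I will argue that the total number of queries at any fixed recursion depth $d$ is $\cO(|U|)$. Let $N_d$ denote the number of subproblems at depth $d$. Since each binary split along the way increases the total of site-list sizes by exactly one (the overlap at $s_{j^\ast}$), after $N_d - 1$ splits we have $\sum_i |U_i'| = |U| + N_d - 1$. Letting $a$ and $b$ be the numbers of depth-$d$ subproblems with $|U_i'| \ge 2$ and $|U_i'| = 1$ respectively, $a + b = N_d$ and $2a + b \le \sum_i |U_i'| = |U| + N_d - 1$, whence $a \le |U| - 1$. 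Only non-trivial subproblems issue queries, so the total work at depth $d$ equals $\sum_{|U_i'| \ge 2} |U_i'| = (|U| + N_d - 1) - b = |U| + a - 1 \le 2|U| - 2$.

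Since the procedure halves $|X'|$ at every split, the recursion depth is bounded by $\cO(\log |X|) = \cO(\log n)$. Summing the per-depth bound of $\cO(|U|)$ over all depths gives $\cO(|U| \log n)$ distance queries in total, matching the claim.
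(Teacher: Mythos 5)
Your proposal is correct and follows essentially the same strategy as the paper's proof: split $X'$ at its midpoint, use $|U'|$ distance queries to find the site of the midpoint, use \cref{lem:cormon} to split the candidate set at that site (keeping the overlap), recurse on both halves, and observe that the per-level total of candidate-set sizes is $\cO(|U|)$ while the depth is $\cO(\log n)$. One small remark: the equality $\sum_i |U_i'| = |U| + N_d - 1$ should really be an inequality $\le$ (a branch can also terminate because $|X'|$ has shrunk to a single vertex, in which case its candidate sites need not be re-covered at greater depths), but since you only use it as an upper bound this does not affect the argument; the paper instead phrases the same counting as ``the candidate sequences at a given level are internally disjoint, so each site is a candidate for at most two active intervals,'' which reaches the same $\cO(|U|)$-per-level conclusion.
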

\begin{proof}
We will only prove the statement for $\top{H}$ as the case of $\bot H$ is analogous.
We start with a single interval, which is all of $\top H$.
We will call an interval \emph{active} if we have not concluded that all of its vertices belong to the same Voronoi cell.
For each active interval $L$, we have a set $C_L$ of \emph{candidate sites}.
Thus, initially, the single interval $\top{H}$ is active, and $U$ is the set of its candidate sites.

The algorithm proceeds by divide and conquer. 
As long as we have an active interval $L$, we perform the following:
we compute the site $u \in C_L$ with the minimum additively weighted distance to the midpoint of $L$. This is done in the time required by $C_L$ distance queries of the form specified in the statement of the lemma.
Then, we split $L$ at this midpoint: for the left part of $L$ the set of candidate sites is now $\{v \in C_L : v \leq u\}$,
while for the right part of $L$ the set of candidate sites is now $\{v \in C_L: v\geq u\}$. If either of these two sets is of size $1$, the corresponding interval becomes inactivate.
That is, we recurse on at most two active intervals of roughly half the length.
In the end, in a left-to-right pass, we merge consecutive intervals all of whose vertices belong to the same Voronoi cell.

Let us now analyze the time complexity of the above algorithm.
First, observe that the sequences of candidates of any two intervals at the same level of the recursion are internally disjoint. Thus, each site is a candidate for at most two active intervals at the same recursive level.
Second, at level $j$ of the recursion, the length of every active interval is $\cO(|\top{H}|/2^j)$.
Hence, the total time required to process all intervals is proportional to the time required by $\cO( |U| \cdot \log n)$ distance queries $\dist(u, v)$, with $u \in \bot{Q}$ and $v \in \partial H$.
\end{proof}

We now present the algorithm for computing the representations of the Voronoi diagrams stored by the oracle.
The algorithm performs the computation in order of decreasing levels of the recursive $(r_t,\ldots,r_0)$-division.

Consider some level $i$, and assume that we have already computed all Voronoi diagrams $\VD(u,R)$ for pieces $R \in \bigcup_{j>i} \mathcal{R}_j$. 
Consider any piece $P\in \mathcal R_{i-1}$. 
Let $Q \in \mathcal R_{i}$ be the parent of $P$ in $\TG$. 
Our goal is to compute, for every $u \in \bot P$, the representation of $\VD(u,Q)$, the Voronoi diagram of $\out{Q}$ with sites $\bot Q$ and additive weights $\dist(u, \bot{Q})$.
Recall that this representation consists of the vertices $\last{s}$ and $\med{s,\last{s}}$ for every site $s \in \bot{Q}$. 
We would like to compute this representation in time roughly proportional to its size $|\bot Q|$.
By~\cref{obs:limited}, using the already computed parts of the oracle for levels $j>i$,
we can already answer any
distance query $\dist(s,v)$ for any $s\in \bot Q$ 
and any $v \in \out{Q}$ in $\cO(2^t \log^2 n)$ time.
These are precisely the distance queries required for computing
partitions of pieces $H$ in $\out{Q}$ w.r.t. sites in $\bot Q$ (\cref{lem:partition}).
 
The computation is done separately for each $u \in \bot{P}$. 
First, we compute the additive weights $\dist(u, \bot{Q})$ in $\cO(|\bot{Q}|\cdot \log n)$ time using the MSSP data structure stored for $Q \setminus (P \setminus \partial P)$ with sites $\bot{P}$.
Next, we cover $\out{Q}$ using $\cO(\log N)$ pieces from $\AG$ that are internally disjoint from $Q$ (i.e.~they may only share boundary vertices). 
These pieces are the  $\cO(\log N)$  siblings of the (weak) ancestors of $Q$ in $\AG$ that have a non-empty intersection with $\out{Q}$ (see~\cref{fig:cover}). 
Notice that these pieces are in $\AG$ but not necessarily in~$\TG$.

We shall find the vertices $\last{s}$ of $\VD(u,Q$) in each such piece $H$ separately. 
We invoke {\sc Partition} on $\top H$ and on $\bot H$, and use~\cref{lem:zoom} to determine whether $H$ contains any vertices $\last{s}$.
If so, we zoom in on each of the two child pieces of $H$ in $\AG$ until, after $\cO(\log N)$ steps, we get to a constant-size piece, in which we can find $\last{s}$ by brute force.
Note, however, that we are aiming for a running time that is roughly proportional to $|\bot Q|$, 
but that the running time of {\sc Partition} depends on the number of sites $U$   w.r.t.~which we partition. 
This is problematic since, e.g., when $H$ contains $\last{s}$ just for a single site $s$, we can only afford to invest $\cOtilde(1)$ time in locating $\last{s}$ in $H$. In this case, computing the partition w.r.t.~$|\bot{Q}|$ is too expensive. 
Even computing the partition just w.r.t.~the sites whose Voronoi cell has non-empty intersection with $H$, which is bounded by $|\top{H}|$, is too expensive.
To overcome this problem we will show that it suffices to compute the partition w.r.t. a smaller sequence of sites, whose size is proportional to the number of sites $s$ with $\last{s}$ in $H$ (actually in $H \setminus \bot{H}$), rather than to the size of $H$ or of $\top H$.
We call such a sequence a {\em safe} sequence of sites for $H$, which we now define formally.
Recall that the Voronoi diagram $\VD(u,Q)$ of $\out{Q}$ has sites $\bot{Q}$.
Let $U$ be a subsequence of $\bot{Q}$. 
Consider the Voronoi diagram $\VD'$ of $\out{Q}$ whose sites are just the sites of $U$ (with the same additive distances as in $\VD(u,Q)$).
We say that $U$ is \emph{safe for $H$} if and only if the sets $\{(s,\last{s}): s \mbox{ is a site and }\last{s} \in H \setminus \bot{H}\}$ are identical for $\VD'$ and $\VD(u,Q)$.

\begin{observation}\label{fact:child}
A sequence that is safe for $H$ is also safe for any child $H'$ of $H$ in $\AG$. 
\end{observation}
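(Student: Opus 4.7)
\medskip

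The plan is to reduce the claim to a purely geometric containment of the regions in which the ``interesting'' $\last{s}$ vertices can lie, after which the set-theoretic part is immediate. Concretely, I would show first that
\[
H' \setminus \bot{H'} \;\subseteq\; H \setminus \bot{H},
\]
or equivalently that $\bot{H} \cap H' \subseteq \bot{H'}$. This is a direct consequence of how $\AG$ splits a piece: $H$ is partitioned into two children that share the middle row or the middle column of $H$. In both cases, one child (call it the ``outer'' child) shares the entire bottom row and right column with $H$, so its $\bot{\cdot}$ contains $\bot{H}$ outright; the other (``inner'') child shares the right column but not the bottom row (or symmetrically), and the portion of $\bot{H}$ lying inside it is exactly its right column (or bottom row), which is included in its own $\bot{\cdot}$. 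Either way the containment holds, and rearranging gives the displayed inclusion.

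Given this, the observation follows from a one-line restriction argument. Let $\VD'$ be the Voronoi diagram of $\out Q$ with site set $U$ and the same additive weights as in $\VD(u,Q)$. Denote
\begin{align*}
A &= \{(s,\last{s}_{\VD(u,Q)}) : \last{s}_{\VD(u,Q)} \in H \setminus \bot{H}\}, \\
B &= \{(s,\last{s}_{\VD'}) : s \in U,\ \last{s}_{\VD'} \in H \setminus \bot{H}\},
\end{align*}
and define $A',B'$ analogously with $H$ replaced by $H'$. Safety of $U$ for $H$ means $A=B$. By the containment proved above, $A' = \{(s,v) \in A : v \in H' \setminus \bot{H'}\}$ and similarly $B' = \{(s,v) \in B : v \in H' \setminus \bot{H'}\}$, so $A'$ and $B'$ are obtained from $A$ and $B$ by intersecting with the same set of admissible second components. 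Equal sets remain equal under such an intersection, hence $A' = B'$, which is exactly safety of $U$ for $H'$.

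The only substantive step is the geometric containment, and it is routine because the splitting rule in $\AG$ always cuts through the interior of $H$ rather than along its rightmost column or bottommost row; there is no real obstacle beyond checking the two split orientations.
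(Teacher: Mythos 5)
Your proof is correct, and since the paper states this as an unproven observation, you have supplied the intended argument. The two ingredients are exactly right: the geometric containment $H' \setminus \bot{H'} \subseteq H \setminus \bot{H}$, and the observation that safety of $U$ for $H$ (equality of two sets of pairs $(s,v)$ with $v \in H \setminus \bot{H}$) restricts to safety for $H'$ simply by intersecting both sides of that equality with the smaller region. Your remark that the containment also survives the paper's convention of replacing every piece and boundary by its intersection with $\out Q$ is implicit in your argument and does go through.

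One small inaccuracy in the geometric step, which does not affect the conclusion: you assert that the ``outer'' child ``shares the entire bottom row and right column with $H$, so its $\bot{\cdot}$ contains $\bot{H}$ outright.'' That is not the case. If $H$ is split along its middle row, the bottom child contains the full bottom row of $H$ but only the lower half of $H$'s right column, so $\bot H \not\subseteq \bot{H'}$; what actually holds is the equality $\bot{H'} = \bot H \cap H'$ for that child, and for the other child one gets $\bot H \cap H'$ equal to the upper half of $H$'s right column, which sits inside $\bot{H'}$. Either way $\bot H \cap H' \subseteq \bot{H'}$, which is the inclusion you display and the one the restriction argument needs, so the proof stands; only the intermediate phrasing overshoots.
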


We will discuss the details of safe sequences after first providing the pseudocode of the procedure {\sc Zoom} for finding the vertices $\last{s}$ in a piece $H$. 

\alglanguage{pseudocode}
\begin{algorithm}[h]
\caption{\textsc{Zoom}$(U,\weight,H)$}\label{alg:zoom}
 \textbf{Input:} The additive weight $\weight(s)$ for each $s \in \bot{Q}$, a piece $H$ in $\AG$ that is internally disjoint from $Q$ and has a non-empty intersection with $\out{Q}$, and a sequence $U \subseteq \bot{Q}$ that is safe for $H$.\\
 \textbf{Output:} All vertices $\last{s}$ for $s \in \bot{Q}$ that belong to $H$. 
\begin{algorithmic}[1]
\If{$|H| = 4$}
	\State \noindent Find all vertices $\last{s}$ in $H$ for all $s \in U$ by computing $\weight(s) + \dist(s,v)$ for all $s \in U$ and all $v$ adjacent to some vertex of $H$.
\EndIf
\State $W \leftarrow $ \textsc{Partition}$(U,\weight,\top{H})$\label{zoom:line4}
\State $W' \leftarrow $ \textsc{Partition}$(U,\weight,\bot{H})$\label{zoom:line5}
\If{$W \setminus W' \neq \emptyset$} \label{zoom:beforeloop}
	\For {each child $H'$ of $H$ in $\AG$}
    	\State $Z \leftarrow$ \textsc{Partition}$(U,\weight,\top{H'})$\label{zoom:touch}
		\State $L \leftarrow Z \bigcap (W \setminus W')$ \label{zoom:last}
		\State $V \leftarrow Z \setminus \{ z \in Z \setminus L : \mbox{ both the predecessor and successor of $z$ in $Z$ are not in $L$} \}$ \label{zoom:neighb}
    	\State \textsc{Zoom}$(V,\weight,H')$\label{zoom:line8}
    \EndFor    
\EndIf
\Statex
\end{algorithmic}
  \vspace{-0.4cm}
\end{algorithm}

The procedure \textsc{Zoom} takes as input a piece $H$ and a sequence $U \subseteq \bot{Q}$ that is safe for $H$.
In order to compute $\VD(u,Q)$, we call the procedure \textsc{Zoom}$(\bot{Q},\weight,H)$ for each of the $\cO(\log N)$ pieces $H$ that we use to cover $\out{Q}$.
Clearly, in each of those initial $\cO(\log N)$ calls $\bot{Q}$ is a safe sequence of the respective piece.
In lines~\ref{zoom:line4}-\ref{zoom:beforeloop} we check whether the condition of~\cref{lem:zoom} is satisfied.
Since $U$ is a safe sequence for $H$, this allows us to infer the set $W \setminus W'=\{s \in \bot{Q} : \last{s} \in H\setminus \partial H \}$.
If $W \setminus W'$ is non-empty, we recurse on both children of $H$ in $\AG$.
Before doing so, we construct a safe sequence for each of those children, of size proportional to $|W \setminus W'|$.
In order to prove the correctness of procedure \textsc{Zoom}, it remains to show that lines~\ref{zoom:line4}-\ref{zoom:neighb} 
indeed produce such a set. The following two lemmas show this (see also~\cref{fig:safe}).

\begin{lemma}\label{lem:remove}
Let $U$ be safe for a piece $H$, such that \textsc{Partition}$(U,\weight,\top{H})=U$. 
Suppose that there are three elements $u_1,u_2,u_3$ of $U$ that appear consecutively (in this order) in both \textsc{Partition}$(U,\weight,\top{H})$ and \textsc{Partition}$(U,\weight,\bot{H})$.
Then, $U\setminus \{u_2\}$ is also safe for $H$.
\end{lemma}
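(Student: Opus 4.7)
The plan is to show that the set of pairs $\{(s, \last{s}) : \last{s} \in H \setminus \bot H\}$ coincides in the Voronoi diagrams with site sets $U$ and $U \setminus \{u_2\}$; combined with the safety of $U$, this yields the safety of $U \setminus \{u_2\}$ for $H$.

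First, since $u_1, u_2, u_3$ appear in the partition of $\bot H$, each of the cells $\Vor(u_1), \Vor(u_2), \Vor(u_3)$ in the $U$-diagram meets $\bot H$, so by \cref{lem:zoom} the rightmost-bottommost vertex of each of these cells lies outside $H \setminus \bot H$; hence none of $u_1, u_2, u_3$ contributes a pair in the $U$-diagram. Since cells only grow when a site is removed, the cell of $u_1$ in the $(U \setminus \{u_2\})$-diagram contains $\Vor(u_1)$ and therefore still meets $\bot H$, and similarly for $u_3$; applying \cref{lem:zoom} again, neither $u_1$ nor $u_3$ contributes in the $(U \setminus \{u_2\})$-diagram either.

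The crux is to establish that for every $s \in U \setminus \{u_1, u_2, u_3\}$, the Voronoi cell of $s$ is identical in the two diagrams, so that its rightmost-bottommost vertex is preserved. Equivalently, every vertex $v \in \Vor(u_2)$ has its second-closest site in $U$ equal to $u_1$ or $u_3$. This is a standard planar Voronoi property: all sites lie on a common face of $\out{Q}$ (the hole left by $Q$), and $u_1, u_2, u_3$ are consecutive in $U$, so $\Vor(u_2)$ can share a planar boundary only with $\Vor(u_1)$ and $\Vor(u_3)$. Formally, suppose some $s \in U$ with $s < u_1$ were the second-closest to some $v \in \Vor(u_2)$; then the concatenation of the shortest $s \to v$ path, the reversed shortest $u_2 \to v$ path, and the arc of $\bot{Q}$ from $u_2$ back to $s$ (passing through $u_1$) forms a simple closed curve in the planar embedding of $\out{Q}$. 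Any shortest $u_1 \to v$ path must cross one of these two shortest paths at some vertex $w$, and a prefix swap at $w$ produces a $u_1 \to v$ walk of length at most $d^{\weight}(s, v)$, contradicting the second-closeness of $s$. A symmetric argument rules out $s > u_3$.

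Putting the pieces together, the pairs contributed by sites other than $u_1, u_2, u_3$ agree in the two diagrams, and the three special sites contribute nothing to either, so the pair sets coincide and $U \setminus \{u_2\}$ is safe for $H$. I expect the main technical obstacle to be the Jordan-curve/prefix-swap step above, which requires a careful treatment of the linear ordering of sites along $\bot{Q}$ together with the monotone structure of shortest paths in the alignment graph (the same ingredients used in the proof of \cref{lem:cormon}).
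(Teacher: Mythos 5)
Your high-level plan is the same as the paper's: show that removing $u_2$ redistributes the vertices of $\Vor(u_2)$ only to $u_1$ and $u_3$, note that $u_1,u_2,u_3$ contribute no pairs in either diagram, and conclude. Your first two paragraphs are fine. The gap is in the crux claim that \emph{every} $v\in\Vor(u_2)$ has second-closest site $u_1$ or $u_3$, equivalently that $\Vor(u_2)$ can border only $\Vor(u_1)$ and $\Vor(u_3)$. Consecutiveness of $u_1,u_2,u_3$ in $U$ does not give this far from $\bot{Q}$: the hypotheses only force $\last{u_1}\notin H\setminus\bot{H}$, so $\last{u_1}$ may lie on $\bot{H}$, in which case $\Vor(u_1)$ ends there and, below $\bot{H}$, $\Vor(u_2)$ can border the cell of the site preceding $u_1$ in $U$. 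The Jordan-curve/prefix-swap argument does not rescue this: the curve need not be simple (a shortest $s\to v$ path and a shortest $u_2\to v$ path may share internal vertices); both $u_1$ and $v$ lie \emph{on} the curve, so the monotone shortest $u_1\to v$ path can stay between the $s\to v$ and $u_2\to v$ paths without crossing either; and even granting a crossing at some $w$, the swapped walk has length $\weight(u_1)+d(u_1,w)+d(w,v)$, which is $\le d^{\weight}(s,v)$ only if $w$ is already weighted-closer to $u_1$ than to $s$ --- which is precisely what is in question.

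The fix, and the reason the lemma's hypotheses speak about $\bot{H}$ at all, is to prove the claim only for $\Vor(u_2)\cap H$. This suffices: $\Vor'(s)\cap H=\Vor(s)\cap H$ for $s\notin\{u_1,u_2,u_3\}$ pins down the $\top{H}$/$\bot{H}$ intersection pattern of $\Vor'(s)$ and hence, via Lemma~\ref{lem:zoom}, whether $\last{s}\in H\setminus\bot{H}$ and which vertex it is, even though $\Vor'(s)$ may gain vertices of $\Vor(u_2)$ outside $H$. To get the restricted claim, the paper takes the last vertex $z_1$ of $\bot{H}$ in $\Vor(u_1)$ and the first vertex $z_3$ of $\bot{H}$ in $\Vor(u_3)$ --- these exist precisely because $u_1,u_3$ appear in \textsc{Partition}$(U,\weight,\bot{H})$, a hypothesis your argument never uses --- and observes that the shortest paths $\rho_1,\rho_3$ to $z_1,z_3$ lie in $\Vor'(u_1),\Vor'(u_3)$ and flank every $y\in\Vor(u_2)\cap H$, so by the order-preservation of Lemma~\ref{lem:cormon} the $\VD'$-site of $y$ is weakly between $u_1$ and $u_3$, hence is $u_1$ or $u_3$.
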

\begin{proof}
To avoid confusion we denote the Voronoi diagram of $\out{Q}$ with sites $U$ by $\VD$ and the one with sites $U\setminus \{u_2\}$ by $\VD'$. We denote the Voronoi cells of $\VD$ by $\Vor(\cdot)$, and those of $\VD'$ by $\Vor'(\cdot)$.
Note that for every $u \in U \setminus \{u_2\}$, $\Vor(u) \subseteq \Vor'(u)$.
By~\cref{lem:zoom}, in $\VD$, $\last{{u_1}},\last{{u_2}},\last{{u_3}} \notin H \setminus \bot{H}$. Hence, in $\VD'$, $\last{{u_1}},\last{{u_3}} \notin H \setminus \bot{H}$.

\begin{figure}[h]
\begin{center}
\includegraphics[page=7,scale=0.6]{figs}
\caption{Illustration for~\cref{lem:remove}. Part of $\out{Q}$ (pink) for some piece $Q$ (gray) is shown. A piece $H$ is indicated by a black rectangle. The sites of $\bot Q$ are numbered 1 through 7. The partition of $\top H$ w.r.t.~$\bot Q$ is $U=(1,2,4,5,6,7)$. Hence, $U$ is safe for $H$. The partition of $\bot H$ w.r.t.~$\bot Q$ is $1,2,4,6,7$. Since sites $1,2,4$ are consecutive in both partitions, $1,4,5,6,7$ is also safe for $H$.
Further, $\bot{Q}$ is also clearly safe for $H$.
However, $1,3,4,5,6,7$ may not be safe for $H$, as $\last{3}$ could be in $H \setminus \partial H$ in the Voronoi diagram with sites $1,3,4,5,6,7$ and the same additive weights.
}\label{fig:safe} 	
\end{center}
\end{figure}

\begin{claim*}
Every vertex $y$ of $\Vor(u_2) \cap H$ 
belongs either to $\Vor'(u_1)$ or to $\Vor'(u_3)$.
\end{claim*}
\begin{claimproof}
Consider the last vertex $z_1$ of $\bot{H}$ that is in $\Vor(u_1)$, and the first vertex $z_3$ of $\bot{H}$ that is in $\Vor(u_3)$. Let $\rho_1$ be a shortest $u_1$-to-$z_1$ path, and $\rho_3$ be a shortest $u_3$-to-$z_3$ path.
Note that all vertices of $\rho_1$ belong to $\Vor(u_1)$ and all vertices of $\rho_3$ belong to $\Vor(u_3)$.
Consider any vertex $y$ of $\Vor(u_2) \cap H$.
The vertex $y$ lies to the right of $\rho_1$  and to the left of $\rho_3$. In $\VD'$, the vertices of $\rho_1$ belong to $\Vor'(u_1)$ and the vertices of $\rho_3$ belong to $\Vor'(u_3)$.
Hence, by~\cref{lem:cormon}, in $\VD'$, $y$ can only belong to a site $s\neq u_2$ that is weakly between $u_1$ and $u_3$. 
Since $u_1,u_2,u_3$ appear consecutively in \textsc{Partition}$(U,\weight,\top{H})=U$, the only such sites are $u_1$ and $u_3$, and the claim follows.
\end{claimproof}

By the above claim, the sets $\{(s,\last{s}): s \mbox{ is a site and }\last{s} \in H \setminus \bot{H}\}$ are identical for $\VD$ and $\VD'$. Since $U$ is safe for $H$, so is $U\setminus \{u_2\}$.
\end{proof}

\begin{lemma}\label{lem:safe}
Suppose that $U$ is safe for $H$.
Then, in each recursive call \textsc{Zoom}$(V,\weight,H')$ made by procedure \textsc{Zoom}$(U,\weight,H)$ for a child $H'$ of $H$ in $\AG$,
$V$ is a safe sequence for $H'$.
\end{lemma}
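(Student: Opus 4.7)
The plan is to invoke Observation~\ref{fact:child} to reduce to $U$ being safe for $H'$, and then to prune $U$ down to $V$ by iterated applications of Lemma~\ref{lem:remove}, maintaining safety for $H'$ at each step.

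First I would verify two preparatory facts. (i)~$H' \setminus \bot{H'} \subseteq H \setminus \bot{H}$: by a short case analysis on whether $H$ is split horizontally or vertically in $\AG$, the middle row or column shared by $H'$ and its sibling lies on $\bot{H'}$ but is interior to $H$, while the other boundaries behave as expected. (ii)~Every $s \in U$ with $\last{s}^U \in H' \setminus \bot{H'}$ lies in $L \subseteq V$: by (i), $\last{s}^U \in H \setminus \bot{H}$, so $s \in W \setminus W'$ by Lemma~\ref{lem:zoom}; and a path-entry argument---a shortest $s$-to-$\last{s}^U$ path stays inside $\Vor^U(s)$ and moves only down and right, so if $s \notin H'$ the path enters $H'$ through $\top{H'}$, while if $s \in H'$ then $s \in \top{H'}$, because $s \in \bot{Q}$ and the fact that $H'$ is internally disjoint from $Q$ force $s$ onto the shared boundary---gives $s \in Z$, and hence $s \in L$. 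The contrapositive of the same path argument shows that $s \in U \setminus Z$ satisfies $\Vor^U(s) \cap H' = \emptyset$.

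The main step is the iterated pruning via Lemma~\ref{lem:remove}. Each site $z \in Z \setminus V$ is, by construction, an interior element of $Z \setminus L$ in $Z$: both $Z$-neighbors of $z$ lie in $Z \setminus L$. The definition of $V$---which retains every $Z$-neighbor of an $L$-site---is tailored so that the ``protectors'' $u_1, u_3$ supplied to Lemma~\ref{lem:remove} (the closest remaining neighbors of $u_2 = z$ in the current set) stay available: scheduling the removals from the interior of each maximal $Z \setminus L$-block outward ensures that, at the moment $z$ is processed, $u_1, u_3$ are still present. Consecutiveness of $u_1, u_2, u_3$ in the $\top{H'}$ and $\bot{H'}$ partitions follows from Lemma~\ref{lem:cormon}, which guarantees that both partitions respect the $\bot{Q}$ order.

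The main obstacle is juggling the pre-conditions of Lemma~\ref{lem:remove}---safety for $H'$, $\textsc{Partition}(\cdot,\weight,\top{H'})$ equalling the current set, and the three-consecutive condition---simultaneously at every step. In particular, the sites of $U \setminus Z$ have to be eliminated before the $\top{H'}$-partition equals the current set, yet their removal can still reassign vertices absorbed by a protected $L$-site whose $\last$ sits strictly inside $H'$. The retained $Z$-neighbors of each $L$-site act precisely as the shields that absorb these reassignments instead, keeping the set $\{(s,\last{s}):\last{s}\in H'\setminus\bot{H'}\}$ unchanged throughout the process.
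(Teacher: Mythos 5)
Your proposal follows the paper's route: invoke Observation~\ref{fact:child} to pass from safety for $H$ to safety for $H'$, pass to the partition $Z$ of $\top{H'}$, and then prune $Z$ down to $V$ by iterated applications of Lemma~\ref{lem:remove}, with Lemma~\ref{lem:zoom} supplying the three-consecutive precondition. Two remarks. The ``interior outward'' scheduling is unnecessary: no matter in which order the sites of $Z\setminus V$ are removed, each candidate's two neighbours in the current set always lie inside the same maximal block of $Z\setminus L$ (its two endpoints are retained in $V$), hence are in $Z\setminus L$, appear in both the $\top{H'}$- and $\bot{H'}$-partitions by Lemma~\ref{lem:zoom}, and are consecutive with the candidate; so Lemma~\ref{lem:remove} applies at every step regardless of order.

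The part that does not hold up is the ``shields'' justification in your last paragraph for why the passage from $U$ to $Z$ preserves safety for $H'$. A site $s'\in U\setminus Z$ can perfectly well be sandwiched in $\bot{Q}$-order between two $L$-sites (its cell being a thin wedge that never reaches $\top{H'}$), and when $s'$ is removed its cell is donated directly to those $L$-sites---the retained $Z\setminus L$ neighbours of $L$-sites provide no protection in that case. The correct argument is more elementary and does not need shields: every $s'\in U\setminus Z$ has $\Vor(s')\cap H'=\emptyset$, so removing all of $U\setminus Z$ leaves $\Vor(s)\cap H'$ (and in particular $\Vor(s)\cap\top{H'}$ and $\Vor(s)\cap\bot{H'}$) unchanged for each surviving $s$. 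By Lemma~\ref{lem:zoom} the predicate $\last{s}\in H'\setminus\bot{H'}$ is therefore unchanged, and when it holds, $\last{s}$ itself cannot move either: a strictly more bottom-right vertex in the enlarged cell would force a down-right path from $\last{s}$ that exits $H'$ through $\bot{H'}$, contradicting $\Vor(s)\cap\bot{H'}=\emptyset$. (The paper asserts that $Z$ is safe for $H'$ without spelling this out, so this is a detail you would be supplying in any case.) With that fix, your proof is sound and coincides with the paper's.
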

\begin{proof}
Since $U$ is a safe sequence for $H$,  
by~\cref{lem:zoom}, after line~\ref{zoom:beforeloop}, $W \setminus W'$ is the set of sites $s$ such that $\last{s} \in H \setminus \bot{H}$. 

In line~\ref{zoom:touch}, $Z$ is the set of sites whose Voronoi cells have non-empty intersection with $H'$. This is because, by~\cref{fact:child}, $U$ is also a safe sequence for $H'$ and so $Z$ is safe for $H'$ as well.
Then, the set $L$, defined as $Z \cap (W \setminus W')$ in line~\ref{zoom:last}, is the set of sites $s$ for which $\last{s} \in H'$.
By~\cref{lem:zoom}, any site $z \in Z \setminus L$ appears both in \textsc{Partition}$(Z,\weight,\top{H'})$ and in
\textsc{Partition}$(Z,\weight,\bot{H'})$.  
In line~\ref{zoom:neighb}, we remove from $Z$ all vertices of $Z \setminus L$ that are not preceded or succeeded by a vertex in $L$. 
Therefore, by considering the removal of these sites one at a time, and directly applying~\cref{lem:remove} to each such removal,
we conclude that the resulting sequence $V$ is safe for $H'$.
\end{proof}

This establishes the correctness of our construction algorithm.
Let us now analyze its time complexity.
Initially, we make $\cO(\log N)$ calls to \textsc{Zoom}$(U,\weight,H)$, each with $U=\bot{Q}$.
In each recursive call, for a child $H'$ of a piece $H$, the set $U$ of sites is of size proportional to the size of the set $\{s \in \bot{Q} : \last{s} \in H\setminus \bot{H}\}$. 
Note that in each level of the tree $\AG$ each $s\in \bot{Q}$ is an element of exactly one such set. 
Hence, each $s \in \bot{Q}$ contributes to $\cO(\log N)$ calls to \textsc{Zoom}: the $\cO(\log N)$ initial ones and at most two more per level of $\AG$ (which is of depth $\cO(\log N)$).

Thus, by~\cref{lem:partition}, computing $\last{s}$ for all $s\in \bot{Q}$ reduces to $\cO(|\bot{Q}|\cdot \log^2 N)$ distance queries $\dist(u, v)$, with $u \in \bot{Q}$ and $v \in \out{Q}$.
We can answer each such query with the portion of the oracle that has already been computed in $\cO(2^t \log^2 N)$ time.
Now, recall that a $\dist(s,\last{s})$ query also computes $\med{s,\last{s}}$, and hence these values can also be retrieved in $\cO(2^t \log^2 N)$ time. 
Thus, $\VD(u,Q)$, which is of size $\cO(|\bot{Q}|)$ can be computed in time $\cO(|\bot{Q}|\cdot 2^t \log^4 N)$, which is $\cO(|\bot{Q}|\cdot N^{o(1)})$ for both choices of $t$ in~\cref{prop}. Therefore, the time to compute $\VD(u,Q)$ for all pieces  is $N^{o(1)} \cdot \sum_Q \cO(|\bot{Q}|) = \sum_{i=0}^{t-1} \frac{N}{r_i}\sqrt{r_i}$, which is $\cO(N^{1+o(1)})$ for both choices of $t$. 
This concludes the proof of Theorem~\ref{thm:main}. 

\section{Tradeoffs with \boldmath$o(N)$ space}
\label{sec:subquadratic}

In this section we prove~\cref{thm:frbased}. 
Recall that for this result we consider integer alignment weights upper-bounded by $w$: A weight 
$w_{match}$ for aligning a pair of matching letters, $w_{mis}$ for aligning a pair of mismatching letters, and $w_{del}$ for letters that are not aligned.
One may assume without loss of generality that $2w_{match}>2w_{mis} \geq w_{del}$~\cite{DBLP:journals/corr/abs-0707-3619}.
Given $w_{match}$, $w_{mis}$ and $w_{del}$, we define $w'_{match}=0$, $w'_{mis}=w_{match}-w_{mis}$ and $w'_{del}=\frac12 w_{match}-w_{del}$.
These weights are also upper-bounded by $w$.
Then, a shortest path (of length $W$) in the alignment grid with respect to the new weights, corresponds to a highest scoring path with respect to the original weights (of score $\frac12(m+n)w_{match}-W$). 

\subparagraph{FR-Dijkstra.} 
We define the \emph{dense distance graph} (DDG) of a piece $P$ as a directed bipartite graph with vertices $\partial P$ and an edge from every vertex $u \in \top{P}$ to every vertex $v \in \bot{P}$ with weight equal to the length of the shortest $u$-to-$v$ path in $P$.
We denote this graph as $\DDG_P$.\footnote{For general planar graphs,  the DDG of a piece is usually defined as a complete directed graph on $\partial P$.}
$\DDG_P$ can be computed in time $\cO((|\partial P|^2 + |P|) \log |P|)=\cO(|P|\log|P|)$ using the MSSP data structure.
In their seminal paper, Fakcharoenphol and Rao~\cite{FR} designed an efficient implementation of Dijkstra's algorithm on any union of DDGs---this algorithm is nicknamed FR-Dijkstra.
FR-Dijkstra exploits the fact that, due to planarity, the adjacency matrix of each DDG can be decomposed into Monge matrices (defined formally in equation~\eqref{eq:monge} below).
In our case, since each DDG is a bipartite graph, the entire adjacency matrix is itself Monge (this will be shown below).
Let us now give an interface for FR-Dijkstra that is convenient for our purposes.

\begin{theorem}[\cite{FR,DBLP:journals/talg/KaplanMNS17,DBLP:conf/esa/MozesW10}]\label{thm:FR}
Dijkstra's algorithm can be run on the union of a set of $DDG$s with $\cO(M)$ vertices in total (with multiplicities)
and an arbitrary set of $\cO(M)$ extra edges in the time required by $\cO(M \log^2 M)$ accesses to edges of this union.
\end{theorem}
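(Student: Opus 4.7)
The plan is to implement Dijkstra's algorithm lazily over the implicit edges supplied by the DDGs, exploiting the Monge structure of each DDG's adjacency matrix so that batches of relaxations can be represented succinctly rather than enumerated. First I would establish the Monge property: each $\DDG_P$ is a bipartite graph from $\top P$ to $\bot P$ with the natural orderings, and its adjacency matrix $w$ satisfies
\[
w(i_1,j_1)+w(i_2,j_2)\;\le\;w(i_1,j_2)+w(i_2,j_1)\qquad\text{for all }i_1<i_2,\ j_1<j_2,
\]
because shortest paths in a plane graph with the given monotone cyclic order on the boundary of $P$ can always be uncrossed, replacing a pair of crossing paths by a cheaper non-crossing pair with swapped endpoints. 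This is why in the alignment-graph setting each DDG gives a single Monge matrix rather than the union of several Monge blocks that arises for general planar graphs.

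The core data-structural step is, for each DDG, a Monge-heap that maintains an \emph{active set} of sources with additive weights $d(\cdot)$ and supports the following operations in amortized $\cO(\log^2 M)$ time per operation: activate a new source $u$ with additive weight $d(u)$; extract, for each sink $v$, the current minimum $\mu(v)=\min_{u\text{ active}}\bigl(d(u)+w(u,v)\bigr)$; and delete a sink once it is finalized. The Monge inequality implies that, for any fixed weighting of active rows, the function $v\mapsto\arg\min_u(d(u)+w(u,v))$ is piecewise constant with each active source optimal on a contiguous interval of sinks. Hence inserting a new row overwrites a single contiguous interval, whose endpoints can be located by binary search over a balanced BST of intervals, with each comparison costing $\cO(\log M)$ matrix entries; a standard potential argument (where the potential counts break-points between intervals) shows that the amortized cost of an insertion is $\cO(\log^2 M)$ regardless of how many existing intervals get overwritten, since every interval destroyed by an insertion can be charged to its own prior creation.

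Given these Monge-heaps I would run Dijkstra on a global priority queue over all boundary vertices and endpoints of extra edges. When a vertex $v$ is extracted with final distance $d(v)$, I would (i) traverse the $\cO(1)$-amortized DDGs in which $v\in\top P$ and activate row $v$ with weight $d(v)$ in the corresponding Monge-heap, which updates $\mu(\cdot)$ for the affected sinks and pushes their decreased keys into the priority queue; (ii) traverse the DDGs in which $v\in\bot P$ and remove $v$ from those heaps; and (iii) relax the incident extra edges. Summed over the algorithm's execution, there are $\cO(M)$ activations, $\cO(M)$ deletions and $\cO(M)$ extra-edge relaxations; each triggers $\cO(\log M)$ priority-queue operations and $\cO(\log^2 M)$ amortized matrix accesses, yielding the claimed $\cO(M\log^2 M)$ bound. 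The main obstacle is the amortized analysis of the Monge-heap: a single activation may visibly overwrite many intervals, and one has to argue via the break-point potential that such bursts are paid for by prior insertions, so that the per-operation amortized bound $\cO(\log^2 M)$ holds; this is precisely the analysis carried out in~\cite{FR,DBLP:conf/esa/MozesW10,DBLP:journals/talg/KaplanMNS17} and is the part I would transcribe rather than reinvent.
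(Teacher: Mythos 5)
The paper does not actually prove Theorem~\ref{thm:FR}: it is invoked as a black-box result cited from Fakcharoenphol--Rao, Kaplan--Mozes--Nussbaum--Sharir, and Mozes--Wulff-Nilsen, with only a brief remark afterwards explaining how the bound improves in the alignment-graph setting. Your sketch reconstructs the intended argument --- Monge property of the DDG, a Monge-heap maintained via intervals in a balanced BST, amortization by charging destroyed intervals to their creation, and a global priority queue --- and this is indeed the structure of FR-Dijkstra, so as a high-level reconstruction it is sound.

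One detail of your accounting does not match what the authors themselves say about the $\log^2 M$ factor. You attribute both logs to the Monge heap, via ``binary search over a balanced BST of intervals, with each comparison costing $\cO(\log M)$ matrix entries.'' A BST comparison during that binary search is a constant number of matrix lookups, not $\cO(\log M)$ of them; the binary search as a whole costs $\cO(\log M)$ accesses. The paper's remark right after the theorem gives the standard breakdown: one $\log M$ comes from decomposing each DDG adjacency matrix into $\cO(\log |\partial P|)$ Monge blocks (needed for general planar pieces whose boundary is a cycle, and precisely the factor that vanishes for the bipartite alignment-graph DDGs you correctly single out), and the other $\log M$ comes from the binary heap used as Dijkstra's priority queue. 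Your proof sketch mentions the Monge-block decomposition as a side remark but never uses it as a source of a $\log M$ factor in the running-time analysis, and then over-charges the comparison cost to compensate. This does not affect the final $\cO(M\log^2 M)$ claim, but it misidentifies where the two logarithms come from, which matters for the subsequent remark in the paper about shaving them in the alignment setting.
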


\begin{remark}
In our case, the runtime of the algorithm encapsulated in the above theorem can be improved to $\cO(M\log\log(nw))$.
One of the two $\cO(\log M)$ factors stems from the decomposition of the adjacency matrix into Monge submatrices, which is not necessary in our case.
The second $\cO(\log M)$ comes from the use of binary heaps. In our case, these heaps store integers in $\cO(nw)$ and can be thus implemented with $\cO(\log\log(nw))$ update and query times using an efficient predecessor structure~\cite{DBLP:journals/ipl/Boas77,DBLP:journals/ipl/Willard83}.
\end{remark}

\subparagraph{A warmup.} 
Let us first show how to construct in $\cOtilde(N)$ time an $\cOtilde(N)$-size oracle that answers queries in $\cOtilde(\sqrt{N})$ time using well-known ideas~\cite{FR}.
We will then improve the size of the data structure by efficiently storing the computed DDGs.

Let us consider an $r$-division of $G$, for an $r$ to be specified later.
Further, consider the tree $\AG'$, obtained from the recursive decomposition tree $\AG$ by deleting all descendants of pieces in the $r$-division.
For each piece $P \in \AG'$, we compute and store $\DDG_P$.
In each of the $\cO(\log N)$ levels of $\AG$, for some value $y$, we have $\cO(N/y)$ pieces, each with $\cO(y)$ vertices and $\cO(\sqrt y)$ boundary vertices.
Hence, both the construction time and the space occupied by these DDGs are $\cOtilde(N)$.

We next show how to compute the weight of an optimal alignment of $S[i \dd j]$ and $T[a \dd b]$, i.e.~compute the shortest path $\rho$ from $u=(i,a)$ to $v=(j,b)$, where $i<j$ and $a<b$.
If $u$ and $v$ belong to $P \setminus \partial P$ for a piece $P$ of the $r$-division, then both $S[i \dd j]$ and $T[a \dd b]$ are of length $\cO(\sqrt r)$, and we can hence run the textbook dynamic programming algorithm which requires $\cO(r)$ time.
Henceforth, we consider the complementary case.

Let $P_u$ and $P_v$ be the distinct $r$-division pieces that contain $u$ and $v$, respectively.
Further, let $Q$ be the lowest common ancestor of $P_u$ and $P_v$ in $\AG'$. For $z \in \{u,v\}$,
let $Q_z$ be the child of $Q$ that contains $z$.
The set of vertices $Q_u \cap Q_v$ are denoted by $\sep(Q)$---which stands for separator.
Observe that $\rho$ must contain at least one vertex from $\sep(Q)$.
Consider the set that consists of $P_z$ and the siblings of weak ancestors of $P_z$ in $\AG'$ that are descendants of $Q$, and call it the \emph{cone of $P_z$}.
The cone of $P_z$ covers $Q_z$ and its elements are pairwise internally disjoint.
See~\cref{fig:cones} for an illustration.
Now, observe, that any shortest path $\rho$ between a vertex of $\partial P_z$ and a vertex of $\sep(Q)$ can be partitioned into subpaths $\rho_1, \ldots , \rho_k$ such that each $\rho_i$ lies entirely within some piece $R_i$ in the cone of $z$ and both $\rho_i$'s endpoints are boundary vertices of $R_i$.
Using these two observations, we can compute a shortest $u$-to-$v$ path by running FR-Dijkstra on the cones of $P_u$ and $P_v$, and, possibly, the following extra edges. 
In the case where the source $u$ (resp.~target $v$) is not a boundary vertex, we include $\cO(\sqrt{r})$ additional edges:
for each boundary vertex $x$ of $P_u$ (resp., $P_v$), an edge from $u$ to $x$ (resp., from $x$ to $v$) with length equal to that of the shortest path from $u$ to $x$ (resp.~from $x$ to $v$).
The weights of such edges can be computed in $\cO(r)$ time using dynamic programming.
Thus, a query can be answered in time $\cOtilde(\sqrt{N} + r)$.
By setting $r=\sqrt{N}$ we get the promised complexities.

\begin{figure}[h]
\begin{center}
\includegraphics[page=8,scale=0.5,trim={0 8cm 0 0},clip]{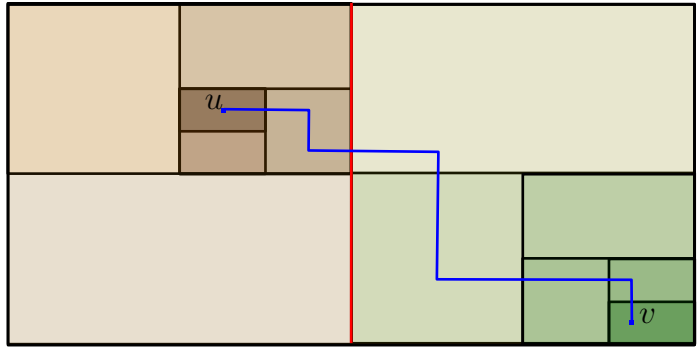}
\caption{The piece $Q$ is shown. $\sep(Q)$ is denoted by red, while a shortest $u$-to-$v$ path is shown in blue. The pieces in the cone of $P_u$ are shaded by brown, while the pieces in the cone of $P_z$ are shaded by pink.}\label{fig:cones}	
\end{center}
\end{figure}

\subparagraph{The tradeoff.} 
We can now describe the entire tradeoff of~\cref{thm:frbased}. 
We assume that $r> w^2$, since otherwise $Nw/\sqrt r = \Omega (N)$ and~\cref{thm:frbased} is  satisfied by the warmup solution. 
For a piece $P$, we will show how to store $\DDG_P$ in $\cO(|\partial P|\cdot w) = \cO(w\sqrt{|P|})$ instead of $\cO(|P|)$ space. 
Our representation will allow retrieving the length of any edge of $\DDG_P$ in $\cOtilde(1)$ time.
Our approach closely follows ideas from~\cite{DBLP:conf/soda/AbboudGMW18}.

For the remainder, we deviate from our ordering convention of $\top{P}$; the first vertex is now the top-right vertex of $P$, and the last is bottom-left. $\bot{P}$ is ordered as before where the first vertex is bottom-left and the last is top-right. 
We denote the $i$-th vertex of $\top{P}$ by $v_i$ and the $j$-th vertex of $\bot{P}$ by $u_j$.
Note that we can infer whether any vertex $u_j$ is reachable from a vertex $v_i$ in $\cO(1)$ time.
For ease of presentation we would like the weights of all edges of $\DDG_P$ to be finite. 
To achieve this, for each edge between two vertices of $\top{P}$, we introduce an artificial edge with weight $w$ in the opposite direction.
It is readily verified that all $v_i$-to-$u_j$ distances that were finite before the introduction of such edges remain unchanged. 
This is because the shortest path between two vertices of this modified graph that lie on the same column (resp.~row) consists solely of vertical (resp.~horizontal) edges.

Let $M$ be the adjacency matrix of $\DDG_P$ with entry $M[i,j]$ storing the distance from $v_i$ to $u_j$, and let $k=|\top{P}|=|\bot{P}|$.
Matrix $M$ satisfies the Monge property, namely:
\begin{equation}\label{eq:monge}
M[i+1,j] - M[i,j] \le M[i+1,j+1] - M[i,j+1] 
\end{equation}
for any $i\in [1,k-1]$ and $j\in [1,k-1]$. This is because the shortest $v_i$-to-$
u_j$ and $v_{i+1}$-to-$u_{{j+1}}$ paths must necessarily cross.

In addition, for any fixed $j\in [1,k]$, for all $i \in [1,k-1]$, we have
\begin{equation}\label{eq:wmonge}
|M[i+1,j]-M[i,j]|\leq w.
\end{equation}
This is because edges $v_iv_{i+1}$ and $v_{i+1}v_i$ both have weight at most $w$.
This implies that $M[i,j] \leq M[i+1,j]+w$, as a shortest $v_i$-to-$u_j$ path cannot be longer than the concatenation of the edge $v_iv_{i+1}$ with a shortest $v_{i+1}$-to-$u_j$ path. Similarly, we have $M[i+1,j] \leq M[i,j]+w$.

Our representation of $M$ is as follows, and fairly standard~\cite{DBLP:conf/soda/AbboudGMW18,DBLP:journals/corr/abs-0707-3619,DBLP:conf/cpm/Charalampopoulos20a}.
We define a $(k-1) \times (k-1)$ matrix $P$, satisfying \[P[i,j]=M[i,j] + M[i+1,j+1] - M[i,j+1] - M[i+1,j].\]
Equations~\eqref{eq:monge} and~\eqref{eq:wmonge} imply that, for any $i\in [1,k-1]$, the sequence of differences
$M[i+1,j]-M[i,j]$
is nondecreasing and contains only values in $[-w,w]$.
Hence, $P$ has $\cO(kw)$ non-zero entries.
Now, observe that 
\begin{equation}\label{eq:permutation}
\sum_{r \geq i, c \geq j} P[r,c]=M[i,j] + M[k,k] - M[i,k] - M[k,j].
\end{equation}

We store the last row and column of $M$. By~\eqref{eq:permutation}, this means that retrieving $M[i,j]$ boils down to computing $\sum_{r \geq i, c \geq j} P[r,c]$. 
We view the non-zero entries of $P$ as points in the plane and build in $\cOtilde(kw)$ time an $\cOtilde(kw)$-size 2D-range tree over them~\cite{Bentley1979}, which can return $\sum_{r \geq i, c \geq j} P[r,c]$ for any $i,j$ in $\cOtilde(1)$ time.
The overall space required by our representation of $\DDG_P$ is thus $\cOtilde(kw)=\cOtilde(|\partial P| \cdot w)$, and any entry of $M$ can be retrieved in $\cOtilde(1)$ time.

In total, over all $\cO(N/r)$ pieces of the $r$-division, the space required is $\cOtilde((N/r)\cdot \sqrt{r}\cdot w)=\cOtilde(Nw/ \sqrt{r})$.
This level dominates the other levels of the decomposition, as the sizes of pieces, as well as their boundaries, decrease geometrically in each root-to-leaf path.
Note that, for the dynamic programming part of the query algorithm, we can simply store the strings, which take $\cO(m+n)$ space.
This concludes the proof of~\cref{thm:frbased}.

\end{document}